\DeclareMathAlphabet{\mathbit}{OML}{cmr}{bx}{it}
\newcommand{\B}[1]{\mathbit{#1}}
\newacro{MIMO}{Multiple-Input and Multiple-Output}
\newacro{MAC}{Multiple Access Channel}
\newacro{SNR}{Signal-to-Noise Ratio}
\newacro{CSI}{Channel State Information}
\newacro{MMSE}{Minimum Mean Squared Error}
\newacro{MSE}{mean squared error}
\newacro{pdf}{probability density function}
\newacro{SOC}{Second Order Cone}
\newacro{SINR}{Signal-to-Interference-Noise Ratio}
\newacro{BC}{Broadcast Channel}
\newacro{BS}{base station}
\newacro{MU}{mobile users}
\newacro{RF}{radio frequency}
\newacro{AWGN}{additive white Gaussian noise}
\newacro{MSO}{matrix shrinkage operator}
\newacro{mmWave}{millimeter-wave}
\newacro{ULA}{uniform linear array}
\newacro{LSR}{least square relaxation}
\newacro{LSV}{left singular vectors}
\newacro{MO}{manifold optimization}
\newacro{AoD}{Angle of Departure}
\newacro{AoA}{Angle of Arrival}
\newacro{PCS}{partially-connected structure }
\newacro{FCS}{fully-connected structure }
\newacro{ADMM}{alternating direction method of multipliers }
\newacro{OFDM}{Orthogonal Frequency Division Multiplexing}
\newacro{UAV}{unmanned aerial vehicle}
\newacro{LOS}{line of sight}
\newacro{SVD}{singular value decomposition}
\newacro{LISA}{linear successive allocation}
\newacro{ITA}{iterative trace-based algorithm}
\newacro{MRT}{maximum ratio transmission}
\newacro{RCCA}{Rank-Constrained Coordinate Ascent}
\newacro{ISI}{inter-symbol interference}
\newacro{ICI}{inter-carrier interference}
\DeclareMathOperator{\Hermitian}{H}
\newcommand{\He}{{\Hermitian}}
\DeclareMathOperator{\Exp}{\mathbb{E}}
\DeclareMathOperator{\trace}{tr}
\DeclareMathOperator{\rank}{rank}
\newcommand{\eul}{{\text{e}}}
\newcommand{\imj}{{\text{j}}}
\DeclareMathOperator{\diag}{diag}
\DeclareMathOperator*{\argmax}{argmax}
\DeclareMathOperator{\Frob}{F}
\def\B{\boldsymbol}
\newtheorem{Proposition}{Proposition}
\def\blfootnote{\xdef\@thefnmark{}\@footnotetext}
\newcommand{\positiontextbox}[4][]{%
  \begin{tikzpicture}[remember picture,overlay]
    \node[inner sep=3pt, fill=yellow,align=left,draw,line width=1pt,#1] at ($(current page.north west) + (#2,-#3)$) {\parbox{.95\paperwidth}{#4}};
  \end{tikzpicture}%
}
\begin{document}

\onecolumn
\begingroup

\setlength\parindent{0pt}
\fontsize{14}{14}\selectfont

\vspace{1cm} 
\textbf{This is an ACCEPTED VERSION of the following published document:}

\vspace{1cm} 
J. P. González-Coma, Ó. Fresnedo and L. Castedo, ``A Rank-Constrained Coordinate Ascent
Approach to Hybrid Precoding for the Downlink of Wideband Massive (MIMO) Systems'',
\textit{IEEE Transactions on Vehicular Technology}, vol. 72, no. 12, pp. 15953-15966, Dec. 2023,
doi: 10.1109/TVT.2023.3293933.

\vspace{1cm} 
Link to published version: \url{https://doi.org/10.1109/TVT.2023.3293933}

\vspace{3cm} 

\textbf{General rights:}

\vspace{1cm} 
\textcopyright 2023 IEEE. This version of the article has been accepted for publication, after peer review. \href{https://creativecommons.org/licenses/by-nc-nd/4.0/}{Personal use of this material is permitted. Permission from IEEE must be obtained for all other uses, in any current or future media, including reprinting/republishing this material for advertising or promotional purposes, creating new collective works, for resale or redistribution to servers or lists, or reuse of any copyrighted component of this work in other works.}
\twocolumn
\endgroup
\clearpage

\title{\LARGE{A rank-constrained coordinate ascent approach to hybrid precoding for the downlink of wideband massive \ac{MIMO} systems}} %

\author{José P. González-Coma, Óscar Fresnedo,~\IEEEmembership{Member,~IEEE,}
	 Luis Castedo,~\IEEEmembership{Senior Member,~IEEE}
  \thanks{J.P. Gonz\'alez-Coma is with Defense University Center at the Spanish Naval Academy, Plaza de Espa\~na s/n, 36920, Mar\'in, Spain. 
  	
  	O. Fresnedo and L. Castedo are with Dept. Computer Engineering \& CITIC Research Center, University of A Coru\~na, Campus de Elvi\~na s/n, 15071, A Coru\~na.}}
 
 \markboth{IEEE TRANSACTIONS ON VEHICULAR TECHNOLOGY}%
 {Shell \MakeLowercase{\textit{et al.}}: A Sample Article Using IEEEtran.cls for IEEE Journals}

\maketitle

\positiontextbox{10.75cm}{27cm}{\footnotesize \textcopyright 2023 IEEE. This version of the article has been accepted for publication, after peer review. Personal use of this material is permitted. Permission from IEEE must be obtained for all other uses, in any current or future media, including reprinting/republishing this material for advertising or promotional purposes, creating new collective works, for resale or redistribution to servers or lists, or reuse of any copyrighted component of this work in other works. Published version:
\url{https://doi.org/10.1109/TVT.2023.3293933}}

\begin{abstract}
An innovative approach to hybrid analog-digital precoding for the downlink of wideband massive \ac{MIMO} systems is developed. The proposed solution, termed  \ac{RCCA}, starts seeking the full-digital precoder that maximizes the achievable sum-rate over all the frequency subcarriers while constraining the rank of the overall transmit covariance matrix. The frequency-flat constraint on the analog part of the hybrid precoder and the non-convex nature of the rank constraint are circumvented by transforming the original problem into a more suitable one, where a convenient structure for the transmit covariance matrix is imposed. Such structure makes the resulting full-digital precoder particularly adequate for its posterior analog-digital factorization. An additional problem formulation to determine an appropriate power allocation policy according to the rank constraint is also provided. The numerical results show that the proposed method outperforms baseline solutions even for practical scenarios with high spatial diversity.

\end{abstract}
\IEEEpeerreviewmaketitle
\begin{IEEEkeywords}
massive MIMO, hybrid precoding, wideband, rank constraints.
\end{IEEEkeywords}

\section{Introduction}

\IEEEPARstart{M}{assive} \ac{MIMO} is a fundamental technology to reach the foreseen challenging requirements of data throughput, spectral efficiency and user accommodation of the future wireless communication systems  \cite{bjornson2017massive,Lu2014}. In this context, a large range of communication scenarios can benefit from using large arrays of antennas, including mobile networks \cite{Larsson14_magazine,Busari18}, vehicular-to-infrastructure (V2I) communications \cite{Pfadler20,Zhang16},  control of  unnamed aerial vehicles (UAVs) \cite{Bjornson_uavs,Chandhar_18}, or machine-to-machine (M2M) communications \cite{Lee21}.  However, unlike conventional \ac{MIMO}, having one dedicated \ac{RF} chain per antenna element is unaffordable in massive \ac{MIMO} because of its extremely high cost and power consumption \cite{Larsson14_magazine}. Consequently, massive \ac{MIMO} transceivers usually have a hybrid digital-analog structure which combine an analog \ac{RF} precoding network, with a limited number of \ac{RF} chains, and a digital baseband processing unit \cite{liang2014low,Sohrabi2016_narrowband,Gao2016}.
In this work, we consider hybrid precoding for the downlink of a wideband massive \ac{MIMO} system. The design of high performance hybrid precoders is significantly more challenging in wideband situations because analog \ac{RF} precoders are frequency-flat, i.e., their frequency response is constant over all the available bandwidth \cite{sohrabi2017hybrid}. 

In the literature, wideband hybrid precoding has been addressed for both single-user \cite{Alkhateeb2016_wideband,Yu2016,Park2017,Chen21} and multiuser \cite{sohrabi2017hybrid,Kong15, gonzalez2019hybrid, Yuan19, GaoCheLiu21} massive \ac{MIMO} systems. For instance, the authors in \cite{sohrabi2017hybrid} exploit the similarity between the channel responses at different subcarriers for \ac{mmWave} systems with a large number of antennas at both ends. Thus, the performance of the solutions in \cite{sohrabi2017hybrid} strictly relies on the assumption of  high similarity among the structure of channel matrices at different subcarriers.
In \cite{Kong15}, hybrid wideband precoders are designed using an alternating optimization algorithm based on the equivalence between sum-rate optimization  and weighted sum \ac{MSE} minimization. This solution, however, is limited to single-stream transmissions and single-antenna users. The authors in \cite{gonzalez2019hybrid} extend the use of  \ac{LISA} to the design of wideband hybrid precoders by exploiting the common subspace structure for the subcarriers of each user. A new stream is allocated to a given user at each iteration of the procedure, which causes less interference to the remaining users. 
However, this approach degrades severely for channels with high frequency selectivity, as it quickly consumes all the available spatial degrees of freedom. In \cite{Yuan19}, the authors propose an alternating algorithm to minimize the Frobenius norm between an unconstrained full-digital solution and the wideband hybrid precoders, but this approach only applies under the assumptions of single-stream transmissions and equal power allocation on each subcarrier. The work in \cite{GaoCheLiu21} proposes a design of the hybrid precoders for  wideband \ac{mmWave} massive \ac{MIMO} systems based on maximizing the mutual information with  a block diagonalization procedure. In this case, the frequency-dependent baseband precoders must be able to completely cancel the multiuser interference, which is a difficult task in general because of the low dimensions of these precoding matrices for a reasonable number of \ac{RF} chains. This effect also presents in other methods pursing the same of strategy of interference removal in the baseband stage, e.g. \cite{Alkhateeb15,sohrabi2017hybrid,ChChJiHa19}. Two-timescale hybrid precoding approaches are considered in \cite{Liu18,Lui_conference2018,Mai18}, where the frequency-flat analog precoding network is designed according to the channel statistics, whereas the digital baseband matrices are updated according to the time-varying \ac{CSI} for each channel realization. The performance of these strategies for wideband scenarios relies on the assumption that the channel statistics are similar over the whole bandwidth, and hence the resulting \ac{RF} precoder is adequate for all the subcarriers. This assumption, however, is unrealistic in large bandwidth scenarios.

On the other hand, most of the previous work assume \ac{LOS} channel models with a limited number of reflection paths \cite{sohrabi2017hybrid,gonzalez2019hybrid,venugopal2017}, which makes the inter-user interference easier to handle. These assumptions, however, are not practical in \ac{mmWave} indoor and sub-6 GHz scenarios, where the channel matrices are spatially rich. This effect hinders the similarity of the channel spatial features over different subcarriers, and it is even more pronounced when considering large bandwidths  \cite{Wang2019}. In these scenarios, the design of hybrid precoders with a frequency-flat analog component is a challenging problem.

In this paper, we address the general case of hybrid precoding for the downlink of a wideband massive \ac{MIMO} system. In particular, we develop an innovative strategy based on limiting the rank of the full-digital precoders while exploiting the spatial features of the wideband channel matrices. This decision is motivated by the lack of accuracy of the methods that approximate the digital solutions for multiuser wideband scenarios, which suffer from a drastic dimensional reduction that severely limits the achievable performance. Accordingly, we design digital precoders with a rank restriction which effectively reduces the number of allocated data streams. The aim is to obtain digital precoders more suitable for the subsequent hybrid factorization. In this sense, since the subspace spanned by the rank-constrained digital design has a limited dimension, the corresponding hybrid design will be  capable of handling wideband multiuser channels more effectively. Few existing works have considered the incorporation of rank constraints to address the limited number of available \ac{RF} chains \cite{BoLeHaVa16,GoFrCa21}. Indeed, the authors in \cite{BoLeHaVa16} showed a relationship between the rank of the full-digital precoders and the accuracy of its corresponding hybrid factorization.
Building on this idea, if the rank of a full-digital solution is adjusted according to the number of available \ac{RF} chains, its corresponding analog and digital counterparts would achieve similar performance results.
 
 However, the incorporation of  rank constraints into the problem formulation for wideband scenarios significantly complicates the design of hybrid precoders due to the flat-frequency nature of the analog \ac{RF} part. To circumvent this issue, we propose an original approach termed as Rank-Constrained Coordinate Ascent (RCCA), %
 where all the user transmit covariance matrices are alternatively updated while constraining the rank of the overall wideband transmit covariance matrix. This is accomplished thanks to the use of a common frequency-flat structure for each user transmit covariance matrix, combined with a power allocation that is jointly designed for all the users and subcarriers. While the common transmit covariance matrix structure accounts for the inter-user interference at all the subcarriers, the power allocation is intended to select the most promising data streams of each user.

The main contributions of this work are summarized as follows:
\begin{itemize}
	\item An \ac{RCCA} algorithm for the design of hybrid precoders for the downlink of wideband massive \ac{MIMO} systems is developed. Unlike existing approaches, this algorithm does not employ a codebook or minimizes the distance to optimized full-digital solutions according to some metric. Instead, it determines full-digital solutions with a particular structure, which is suitable for their posterior analog-digital factorization. The proposed design also aims at exploiting the potential similarity among the user subcarriers. However, unlike other previous approaches  \cite{sohrabi2017hybrid,venugopal2017,GaoCheLiu21}, the proposed solution addresses a general channel model, since it does not make explicit assumptions on the frequency response of the subcarriers (e.g., \ac{mmWave} scenarios can assume that the channel response at different subcarriers is rather similar). %
	
	\item The initial rank-constrained optimization problem is transformed into a sequence of steps which provide a useful insight about the desired structure of the user transmit covariance matrices. Also, the transmit power distribution among users and subcarriers is specifically designed to incorporate the rank constraint, and to consider the inter-user interference across the entire channel bandwidth.
	
	\item Under the assumption of a rank constraint design of the hybrid precoders, we prove that the allocation of exactly the same number of streams as that of available \ac{RF} chains leads to the optimal system performance in terms of achievable sum-rate.

\end{itemize}

\section{System Model}\label{Sec:model}
Let us consider the downlink of a wideband massive \ac{MIMO} system where a \ac{BS} with $M$ antennas communicates to $U$ users with $R$ antennas each. The number of RF chains at the \ac{BS} is limited to $N_\text{RF} \ll M$. As in  \cite{liang2014low,Sohrabi2016_narrowband,Gao2016}, we also assume that $R\ll M$. \ac{OFDM} modulation with $K$ subcarriers is utilized for transmission over the wideband channels.

Let $\B{s}_u[k]$ be the data symbols transmitted to user $u$ over subcarrier $k$, with $u = 1, \ldots, U$ and $k=1, \ldots, K$. We assume that the \ac{BS} allocates $N_{s,u}$ data streams to user $u$. Hence, $\B{s}_u[k] \in \mathbb{C}^{N_{s,u} \times 1}$ and the total number of streams transmitted by the \ac{BS} is $N_s = \sum_{u=1}^U N_{s,u}$. We also assume that $\Exp[\B{s}_u[k]\B{s}_j^\He[k]]=\mathbf{0}$ for $j\neq u$, and $\Exp[\B{s}_u[k]\B{s}_u^\He[l]]=\mathbf{0}$ for $k\neq l$. %
The user data is linearly precoded using a limited number of RF chains at the BS, following a hybrid digital-analog architecture, with $\B{P}^u_\text{BB}[k]\in\mathbb{C}^{N_\text{RF}\times N_{s,u}}$ being the digital baseband precoder for user $u$ at the $k$-th subcarrier. Hence, the discrete-time digitally precoded OFDM  symbol vector $\B{s}_u^{n}\in\mathbb{C}^{N_\text{RF}}$ reads as 
	\begin{equation} \label{eq:dataSymbol}
		\B{s}_u^{n} = \sum_{k=1}^{K}\B{P}^u_\text{BB}[k]\B{s}_u[k]\eul^{\imj \frac{2\pi(k-1) n}{K}},~~~n=0,1, \ldots, K-1.
	\end{equation} 
	Note the number of OFDM symbols in the time domain is also $K$. We consider that data symbols $\B{s}_u^{n}$ in \eqref{eq:dataSymbol} modulate with the square-root raised cosine pulse-shaping filter $ p_{\text{srrc}}(t)$ such that $p_{\text{srrc}}(t)*p_{\text{srrc}}(-t)=p_{\text{rc}}(t)$ is the raised cosine pulse signal which  fulfills the Nyquist zero \ac{ISI} criterion. Next, the  resulting continuous-time vector signal inputs the analog precoding stage modeled by the matrix $\B{P}_\text{RF}\in\mathcal{P}^{M \times N_\text{RF}}$. This matrix is frequency-flat and common to all users and $\mathcal{P}$ represents the set of feasible RF precoder matrices with unit-modulus entries. Correspondingly, the time-domain signal vector for user $u$ is given by 
	\begin{equation}
		\B{s}_u(t)=\sum_{n=0}^{N_\text{c}-1}\sum_{k=1}^{K}\B{P}_\text{RF}\B{P}^u_\text{BB}[k]\B{s}_u[k] p_{\text{srrc}}(t-nT_s)\eul^{\imj \frac{2\pi(k-1) n}{K}},  %
		\label{eq:Bs(t)}
	\end{equation}
	where $T_s$ is the sampling period. The linear combination of the signals $\B{s}_u(t)$ for the different users is transmitted through the wireless channel, for which we consider the following general geometrical model for the channel responses \cite{Sayeed2002,VeGoHe19,Chen21}
	\begin{equation}
		\B{H}_u(t)[k]=\gamma\sum_{p=1}^{N_p}\alpha_{u,p}\delta(t-\tau_{u,p})\B{a}_{\text{R},u}(\theta_{u,p})[k]\B{a}_{\text{T}}^\He(\phi_{u,p})[k],\label{eq:channelModelTime}
	\end{equation}
	with  $\gamma=\sqrt{MR/N_p}$ being the power normalization factor, $N_p$ the number of channel paths, $\alpha_{u,p}$ the complex gain, $\tau_{u,p}$ the delay, and  $\B{a}_{\text{R},u}(\theta_{u,p})[k]\in\mathbb{C}^{R\times1}$ and $\B{a}_{\text{T}}(\phi_{u,p})[k]\in\mathbb{C}^{M\times1}$ the frequency dependent array response vectors of the user and the BS at the corresponding angular directions $\theta_{u,p}$ and $\phi_{u,p}$ for subcarrier $k$, respectively. Therefore, the discrete-time equivalent channel response at the $d$-delay tap, which includes the effects of the transmit and receive pulse-shaping filters, can be written as 
	\begin{equation}
		\B{H}_u^d[k]=\gamma\sum_{p=1}^{N_p}\alpha_{u,p}p_{\text{rc}}(dT_s-\tau_{u,p})\B{a}_{\text{R},u}(\theta_{u,p})[k]\B{a}_{\text{T}}^\He(\phi_{u,p})[k],\label{eq:channelModelFreq}
	\end{equation}
	where $d\in\{1,\ldots , D\}$, being $D$ the maximum number of delay taps. From the former equation, the spatial relationship among the different subcarriers is apparent, specially for those with similar indices $k$. Moreover, the usage of \acp{ULA} at both ends is
considered here, although the proposed scheme is applicable to other configurations. Therefore, the array response vectors are given by \cite{gonzalez2019hybrid,ChChJiHa19}
\begin{equation}
\B{a}_{\text{T}}(\phi)[k]
=\frac{1}{{\sqrt{M}}}\Big[
1,e^{j\frac{2\pi c}{f[k]}\;d \; \text{sin}\;\phi},
\ldots,e^{j\frac{2\pi c}{f[k]}\;d  \left(  {M}-1\right)\;\text{sin}\;\phi} \Big]^{T},
\label{eq:steeringVector}   
\end{equation}
where $f[k]$ is the frequency at the $k$-th subcarrier, $c$ is the speed of light, and $d$ is the inter-antenna spacing. The carrier frequency  for the $k$-th sub-band is given by $f[k]=f_c+\xi[k]$, where $f_c$ is the central frequency and $\xi[\ell]$ represents an offset with respect to $f_c$. If we assume a signal bandwidth of $B$, the frequency offset for the $k$-th subcarrier reads as $\xi[k]=(k-1-\frac{K-1}{2})\frac{B}{K}$. Note that the expression for $\B{a}_{\text{R}}(\theta)[k]$ is similar to that of $\B{a}_{\text{T}}(\phi)[k]$, and thus omitted for brevity. When the signal bandwidth satisfies $B\ll f_c$, $f[k]\approx f_c$ and the response vectors are not affected by the frequency offsets. On the contrary, when $B$ is comparable to $f_c$ the beam-squint effect arises in \eqref{eq:steeringVector}.

In order to suppress \ac{ISI} and \ac{ICI},  a cyclic prefix large enough is added to the \ac{OFDM} symbol in \eqref{eq:dataSymbol}. Moreover, the basis functions employed to build the time-domain signal in \eqref{eq:Bs(t)} are also chosen to constitute an orthogonal set on the subcarrier index $k$. Combining these two facts, the discrete-time equivalent of the signal received by user $u$ over subcarrier $k$, $\B{x}_u[k] \in \mathbb{C}^{R \times 1}$, can be expressed as \cite{rodriguez2007comunicaciones}
\begin{align}
\B{x}_u[k]&=\sum_{d=0}^{D-1}\B{H}_u^d[k]\sum_{i=1}^{U}\B{P}_\text{RF}\B{P}^i_\text{BB}[k]\B{s}_i[k]\eul^{-\imj\frac{2\pi (k-1)d}{K}}+\B{n}_u\notag\\
&=\B{H}_u[k]\sum_{i=1}^{U}\B{P}_\text{RF}\B{P}^i_\text{BB}[k]\B{s}_i[k]+\B{n}_u,
\label{eq:receivedSignal}
\end{align}
where $\B{H}_u[k]\in\mathbb{C}^{R\times M}$ is the $u$-th user channel frequency response matrix at subcarrier $k$, $\B{H}_u[k]=\sum_{d=0}^{D-1}\B{H}_u^d[k]\eul^{-\imj\frac{2\pi (k-1)d}{K}}$, and $\B{n}_u\sim\mathcal{N}_\mathbb{C}(\mathbf{0},\sigma_n^2\mathbf{I}_{R})$ is the receiving \ac{AWGN}.  The available transmit power at the \ac{BS} is $P_{\text{tx}}$, and hence the system \ac{SNR} is defined as SNR = $P_{\text{tx}}/\sigma_n^2$.  Without loss of generality, we assume $\sigma_n^2=1$ for simplicity.

\section{Initial Problem Formulation}
Considering the communication model described in the previous section, we are interested in exploring the design of  hybrid precoders which maximize the theoretically achievable sum-rate. This  metric is extensively used in the literature to measure the overall system performance \cite{Alkhateeb15,sohrabi2017hybrid,Gao2016,Elbir20,Yuan19}.
Assuming Gaussian signaling and superposition coding with successive interference cancellation, the achievable rate for user $u$ over subcarrier $k$ is given by
\begin{align*}
R_u[k] = &\log_2\det\left(
\mathbf{I}_R%
+\B{X}_u^{-1}[k]\B{H}_u[k]\B{P}^u_\text{H}[k]\B{P}^{u,\He}_\text{H}[k]\B{H}_u^{\He}[k]\right),
\end{align*}
with $\B{X}_u[k]=\sum_{i> u}\B{H}_u[k]\B{P}^i_\text{H}[k]\B{P}^{i,\He}_\text{H}[k]\B{H}_u^{\He}[k]+\mathbf{I}_{R}$ containing the interference plus noise terms for a given user decoding order, and with $\B{P}^i_\text{H}[k]=\B{P}_\text{RF}\B{P}^i_\text{BB}[k]$ the hybrid precoding matrix for the $i$-th user at subcarrier $k$. Note that the expression above represents the individual rates which could be obtained theoretically with a configuration based on successive interference cancellation, and therefore they would actually lead to an upper bound  for the achievable sum-rate in practical systems. This assumption also matches the common strategy in which the precoder is first designed assuming an ideal receiver, and then the corresponding receiver is calculated for this precoder, thus enabling decoupled  precoding and decoding problems \cite{LeeWaChYu15,Sohrabi2016_narrowband,sohrabi2017hybrid,AyRaAbPiHe14,YuShZhLe16,ChChJiHa19}.

As already mentioned, we aim at determining the precoder matrices $\B{P}_\text{RF}$ and $\B{P}_\text{BB}^u[k]$, $u=1, \ldots, U$, $k=1, \ldots, K$, which maximize the achievable sum-rate given by $$R=\frac{1}{K}\sum_{k=1}^K\sum_{u=1}^UR_u[k],$$  for a fixed transmit power $P_\text{tx}$. Hence, the corresponding constrained optimization problem can be formulated as
\begin{align}
\max_{\B{P}_\text{RF},\B{P}_\text{BB}^u[k]}& R\quad\text{s.t.}\quad\sum_{k=1}^K\sum_{u=1}^U\|\B{P}_\text{RF}\B{P}_\text{BB}^u[k]\|_{\Frob}^2\leq P_\text{tx},\; \B{P}_\text{RF}\in\mathcal{P}. %
\label{eq:problFormOriginal}
\end{align}
This difficult optimization problem can be addressed by first determining the full-digital precoders which maximize the achievable sum-rate $R$, and next factorizing them into their \ac{RF} and baseband counterparts. Thus, by defining the overall hybrid transmit covariance matrix as
\begin{align}
	\B{Q}_\text{H}=&\B{P}_\text{RF}\left[\B{P}^1_\text{BB}[1]\B{P}^{1,\He}_\text{BB}[1],\ldots,\B{P}^1_\text{BB}[K]\B{P}^{1,\He}_\text{BB}[K],\right.\notag\\
	&\left.\ldots,\B{P}^U_\text{BB}[K]\B{P}^{U,\He}_\text{BB}[K]\right]\left(\mathbf{I}_{UK}\otimes\B{P}^\He_\text{RF}\right),
	\label{eq:covHybrid}
\end{align} 
we can observe that a side effect of using hybrid precoding is the imposition of the restriction $\rank(\B{\B{Q}_\text{H}})\leq N_\text{RF}$. Building on this idea and contrary to approaches like, e.g.  \cite{YuShZhLe16,Yuan19}, we impose a rank constraint on the design of the full-digital precoders with the aim of obtaining a more accurate factorization in a later step.   
For that, we first express the achievable sum-rate in the downlink as follows
\begin{align*}
R^\text{D}=\frac{1}{K}\sum_{u=1}^{U}\sum_{k=1}^{K}\log_2\det\left(
\mathbf{I}_R +\B{Y}_u^{-1}[k]\B{H}_u[k]\B{Q}_u[k]\B{H}_u^{\He}[k]\right),
\end{align*}
where the matrix $\B{Q}_u[k]$ is introduced to represent the full-digital transmit covariance matrix  for the $k$-th subcarrier of user $u$, and  $\B{Y}_u[k]=\sum_{i> u}\B{H}_u[k]\B{Q}_i[k]\B{H}_u^{\He}[k]+\mathbf{I}_{R}$. Note that the super-index $^D$ refers to the downlink. Next, we introduce the following  rank-constrained formulation
\begin{align}
\max_{\{\B{Q}_u[k]\}_{u=1,k=1}^{U,K}} R^\text{D} \quad\text{s.t.}\quad&\sum_{k=1}^K\sum_{u=1}^U\trace(\B{Q}_u[k])\leq P_\text{tx}, \notag \\
&\rank(\B{Q})\leq N_\text{RF},\label{eq:rankFormulation}
\end{align}
where  $\B{Q}=[\B{Q}_1[1],\ldots,\B{Q}_1[K],\ldots,\B{Q}_U[K]]\in\mathbb{C}^{M\times MUK}$ is the overall transmit covariance matrix. In particular, the rank constraint in \eqref{eq:rankFormulation} forces the precoder design to satisfy $\rank(\B{Q})\leq N_\text{RF}$; that is, the columns of $\B{Q}$ will span a subspace of dimension smaller than or equal to $N_\text{RF}$ on the vector space $\mathbb{C}^M$. Observe that the maximum subspace dimension achievable by the span of $\B{Q}_\text{H}$ in \eqref{eq:covHybrid} also satisfies this restriction. Therefore, our strategy limits the performance of the full-digital solution to ensure that it can be accurately approximated by means of a factorization procedure. Since we are imposing a rank constraint but not restricting the entries of $\B{Q}$ in any way, the problem formulation in \eqref{eq:rankFormulation} can be interpreted as a relaxed version of the original one in \eqref{eq:problFormOriginal}, and hence the resulting sum-rate will actually be an upper bound to that obtained in \eqref{eq:problFormOriginal}. Note also that, contrary to other approaches like e.g. \cite{BoLeHaVa16,GoFrCa21}, we do not focus on selecting the best subset of users since all of them will be served if $K\leq N_\text{RF}$ holds.

\section{Reformulation of the optimization problem}
Solving the optimization problem \eqref{eq:rankFormulation} is challenging due to the coupling between the covariance matrices for different users and subcarriers and the presence of the rank restriction. These constraints are non-convex, and hence it is difficult to deal with them. For this reason, it is common to transform rank-constrained problems into other ones that are easier to handle by means of relaxed formulations or convex surrogates \cite{YuLa11,MaGoLi11}. In this case, we start leveraging the duality between the achievable sum-rate for the downlink and a virtual uplink \cite{ViJiGo03}. Assuming successive decoding, the achievable sum-rate for the dual virtual uplink is given by
\begin{align}
R^\text{U}&=\sum_{k=1}^K\log_2\det\left(\mathbf{I}_M+\sum_{u=1}^U\B{H}_u^\He[k]\B{S}_u[k]\B{H}_u[k]\right),\label{eq:sumrateMAC}
\end{align}
where $\B{H}_u^\He[k]$ and $\B{S}_u[k]$ are the uplink channel response and the uplink transmit covariance matrices for user $u$ over the $k$-th subcarrier, respectively. Observe that no particular user decoding order is assumed in \eqref{eq:sumrateMAC}. Therefore, different orderings differ on the individual achievable rates while leading to the same achievable sum-rate. Moreover, the result in \cite{ViJiGo03} guarantees that the uplink sum-rate $R^\text{U}$ is achieved in the downlink by selecting the transmit covariance matrices as $\B{Q}_u[k]=\B{\Delta}_u[k]\B{S}_u[k]\B{\Delta}_u^\He[k]$, where 
\begin{equation}
\B{\Delta}_u[k]=\B{B}_u^{-1/2}[k]\B{F}_u[k]\B{G}_u^\He[k]\B{A}_u^{1/2}[k]
\label{eq:UL-DLconv}
\end{equation} 
are the uplink-downlink conversion matrices. These matrices are determined via the auxiliary matrices
\begin{equation}
\begin{aligned}[b]
\B{A}_u[k] & =\mathbf{I}_R+\sum\nolimits_{i=1}^{u-1}\B{H}_u[k]\B{Q}_i[k]\B{H}_u^\He[k]\\
\B{B}_u[k] & =\mathbf{I}_M+\sum\nolimits_{i=u+1}^{U}\B{H}_i^\He[k]\B{S}_i[k]\B{H}_i[k]
\end{aligned}
\label{eq:ABduality}
\end{equation}
where  $\B{F}_u[k]\in\mathbb{C}^{M\times R}$ and $\B{G}_u[k]\in\mathbb{C}^{R\times R}$ result from the \ac{SVD} $\B{F}_u[k]\B{\Xi}_u[k]\B{G}_u^\He[k]=\B{B}_u^{-1/2}[k]\B{H}_u^\He[k]\B{A}_u^{-1/2}[k]$. Notice that $\B{A}_u[k]$ and $\B{B}_u[k]$ are obtained in sequential order, i.e., for $u=1,2,\ldots,U$, and independently for each subcarrier. 

By means of this duality,  the optimization problem in \eqref{eq:rankFormulation} is reformulated as 
\begin{align}
\max_{\{\B{S}_u[k]\succeq 0\}_{u=1,k=1}^{U,K}} R^\text{U}\quad\text{s.t.}\quad&\sum_{k=1}^K\sum_{u=1}^U\trace(\B{S}_u[k])\leq P_\text{tx}, \notag\\
&\rank(\B{W})\leq N_\text{RF},\label{eq:rankFormulationUL}
\end{align}
with $\B{W}=[\B{\Delta}_1[1]\B{S}_1[1]\B{\Delta}_1^\He[1],\ldots,\B{\Delta}_U[K]\B{S}_U[K]\B{\Delta}_U^\He[K]]$. Recall that the result in \cite{ViJiGo03} guarantees that the achievable sum-rate $R^\text{U}$ obtained with a set of uplink transmit covariance matrices $\B{S}_u[k]$ $\forall u,k$ is also achievable in the downlink using $\B{Q}_u[k]=\B{\Delta}_u[k]\B{S}_u[k]\B{\Delta}_u^\He[k]$ $\forall u,k$, i.e., $R^\text{U}=R^\text{D}$ under the same total power constraint $\sum_{k=1}^K\sum_{u=1}^U\trace(\B{S}_u[k])=\sum_{k=1}^K\sum_{u=1}^U\trace(\B{Q}_u[k])$. Moreover, \eqref{eq:rankFormulationUL} includes the uplink-downlink conversion matrices, $\B{\Delta}_u[k]$ $\forall k,u$, in the restriction $\rank(\B{W})\leq N_\text{RF}$ to ensure that the downlink rank restriction holds. Therefore, $\rank(\B{W})\leq N_\text{RF}$ in \eqref{eq:rankFormulationUL} is equivalent to $\rank(\B{Q})\leq N_\text{RF}$ in \eqref{eq:rankFormulation}. By virtue of the total power and the rank conditions, optimal solutions of \eqref{eq:rankFormulationUL} also lead to optimal solutions for the problem in \eqref{eq:rankFormulation}. Accordingly, we first solve \eqref{eq:rankFormulationUL} in the dual virtual uplink and next obtain the corresponding downlink precoders with the help of the uplink-downlink conversion matrices. Towards this aim, we develop two intermediate steps to circumvent the rank restriction in \eqref{eq:rankFormulationUL}. In the first step, a common spatial structure for the covariance matrices is proposed for all the frequencies of each user $u$ and, based on that decision, the second step determines the power allocation for each subcarrier $k$.

\subsection{Frequency-flat transmission covariance matrix structure}
\label{sec:flatStructure}
To address the first step, we must consider the involved structure of $\B{W}$ in \eqref{eq:rankFormulationUL}. A design of $\B{S}_u[k]$ independent for each subcarrier would definitely lead to large values for the rank of $\B{W}$. Moreover, in a more practical sense, a design of $\B{S}_u[k]$ exploiting all the available degrees of freedom is hardly achievable with a frequency-flat analog precoder design. Therefore, we will start by providing some insight regarding the structure of a frequency-flat transmit covariance matrix for a particular user. 

For that, let us first rewrite \eqref{eq:sumrateMAC} as $R^\text{U}=\sum_{u=1}^U R^\text{U}_u$. Moreover, we assume that user $u$ is decoded first, without loss of optimality, to obtain
\begin{equation}
R^\text{U}_u=\sum_{k=1}^K\log_2\det\left(\mathbf{I}_M+\B{H}_{\bar{u}}^\He[k]\B{S}_u[k]\B{H}_{\bar{u}}[k]\right),
\label{eq:perUserRateA}
\end{equation}
where $\B{H}_{\bar{u}}[k]=\B{H}_u[k](\sum_{i\neq u}\B{H}_i^\He[k]\B{S}_i[k]\B{H}_i[k]+\mathbf{I}_M)^{-\frac{1}{2}}$ represents the $u$-th user effective channel for subcarrier $k$. With the purpose of studying the desirable properties of a frequency-flat covariance matrix, we substitute $\B{S}_u[k]$ by its frequency-flat version $\B{S}_u$ in \eqref{eq:perUserRateA}, leading to
\begin{equation}
\bar{R}^\text{U}_{u}=\sum_{k=1}^K\log_2\det\left(\mathbf{I}_M+\B{H}_{\bar{u}}^\He[k]\B{S}_u\B{H}_{\bar{u}}[k]\right).
\label{eq:perUserRate}
\end{equation}
Fixing the transmit covariance matrices of the remaining users, i.e. $\B{S}_i[k]\;\forall k$ with $i\neq u$, the matrix $\B{S}_u$ which maximizes \eqref{eq:perUserRate} can be determined independently, thus ensuring an increase on the achievable sum-rate given by $\bar{R}^\text{U} = \sum_{i\neq u}^UR^\text{U}_{i}+\bar{R}^\text{U}_{u}$. 

Under the previous considerations, we can approach the problem in \eqref{eq:rankFormulationUL} piecemeal by stating  the next optimization problem for the update of the $u$-th covariance matrix, 
\begin{align}
\max_{\B{S}_u\succeq 0} \bar{R}^\text{U}_u\quad~\text{s.t.}\quad&\sum_{k=1}^K\sum_{i\neq  u}\trace(\B{S}_i[k])+\trace(\B{S}_u)\leq P_\text{tx},\label{eq:individualFormulationUL}
\end{align}
where the rank constraint is for the time being disregarded. The formulation in \eqref{eq:individualFormulationUL} leads to the following Lagrangian function
\begin{align}
L(\B{S}_u,\lambda_u,\B{\Lambda}_u)&=\bar{R}^\text{U}_u+\trace(\B{\Lambda}_u\B{S}_u)\\
&-\lambda_u\left(\sum_{k=1}^K\sum_{i\neq  u}\trace(\B{S}_i[k])+\trace(\B{S}_u)-P_\text{tx}\right),\notag
\end{align}
with $\lambda_u\geq 0$ and $\B{\Lambda}_u\succeq\mathbf{0}$ an auxiliary matrix to guarantee the constraints on $\B{S}_u$. In this scenario, we can readily obtain the Karush-Kuhn-Tucker (KKT) conditions as
\begin{align}
	\frac{\partial L(\B{S}_u,\lambda_u,\B{\Lambda}_u)}{\partial \B{S}_u}&=\B{\Lambda}_u-\lambda_u\mathbf{I}_R\label{eq:KKT}\\
	+\sum_{k=1}^K&\left(\mathbf{I}_R+\B{H}_{\bar{u}}[k]\B{H}_{\bar{u}}^\He[k]\B{S}_u\right)^{-1}\B{H}_{\bar{u}}[k]\B{H}_{\bar{u}}^\He[k]\nonumber,
\end{align}
with $\B{S}_u\B{\Lambda}_u=\B{\Lambda}_u\B{S}_u=\mathbf{0}$, and 
\begin{equation}
	\lambda_u\left(\sum_{k=1}^K\sum_{i\neq  u}\trace(\B{S}_i[k])+\trace(\B{S}_u)-P_\text{tx}\right)=0\notag.
\end{equation}
Now, multiplying the expression in \eqref{eq:KKT} times $\B{S}_u$ we get
\begin{equation}
\hspace*{-0.1cm}
\left(\sum_{k=1}^K\left(\mathbf{I}_R+\B{H}_{\bar{u}}[k]\B{H}_{\bar{u}}^\He[k]\B{S}_u\right)^{-1}\B{H}_{\bar{u}}[k]\B{H}_{\bar{u}}^\He[k]-\lambda_u\mathbf{I}_R\right)\B{S}_u=\mathbf{0}.%
\label{eq:KKTcondition}
\end{equation} 

\begin{Proposition}\label{th1}
	The KKT condition in \eqref{eq:KKTcondition} holds for $K=1$ and $\B{S}_u$ fulfilling  $\left(\mathbf{I}_R+\B{H}_{\bar{u}}[k]\B{H}_{\bar{u}}^\He[k]\B{S}_u\right)^{-1}\B{H}_{\bar{u}}[k]\B{H}_{\bar{u}}^\He[k]\succ\mathbf{0}$.
\end{Proposition}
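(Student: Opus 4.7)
The plan is to specialize the KKT relation in \eqref{eq:KKTcondition} to the single-subcarrier setting and then to exhibit a consistent solution, so that the claim becomes a constructive existence statement rather than an implicit eigenequation.

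First, I would set $K=1$ and suppress the subcarrier index for brevity, so that the sum in \eqref{eq:KKTcondition} collapses to $(\B{M}-\lambda_u\mathbf{I}_R)\B{S}_u=\mathbf{0}$, with $\B{M}=(\mathbf{I}_R+\B{H}_{\bar{u}}\B{H}_{\bar{u}}^\He\B{S}_u)^{-1}\B{H}_{\bar{u}}\B{H}_{\bar{u}}^\He$. The hypothesis $\B{M}\succ\mathbf{0}$ then makes $\B{M}$ both Hermitian and invertible: applying the push-through identity $(\mathbf{I}_R+\B{A}\B{B})^{-1}\B{A}=\B{A}(\mathbf{I}_R+\B{B}\B{A})^{-1}$ with $\B{A}=\B{H}_{\bar{u}}\B{H}_{\bar{u}}^\He$ and $\B{B}=\B{S}_u$ yields $\B{M}=\B{H}_{\bar{u}}\B{H}_{\bar{u}}^\He(\mathbf{I}_R+\B{S}_u\B{H}_{\bar{u}}\B{H}_{\bar{u}}^\He)^{-1}$, and a direct conjugation then shows $\B{M}^\He=\B{M}$.

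Second, I would exhibit an explicit $\B{S}_u$ fulfilling the relation. Let $\B{H}_{\bar{u}}\B{H}_{\bar{u}}^\He=\B{U}\diag(h_1,\ldots,h_R)\B{U}^\He$ be the eigendecomposition; note that $\B{M}\succ\mathbf{0}$ forces $\B{H}_{\bar{u}}\B{H}_{\bar{u}}^\He$ to be invertible and hence $h_i>0$ for every $i$. Choosing $\B{S}_u=\B{U}\diag(\sigma_1,\ldots,\sigma_R)\B{U}^\He$, so that the covariance shares the channel eigenbasis, makes every factor of $\B{M}$ diagonal in $\B{U}$. Consequently $\B{M}\B{S}_u=\lambda_u\B{S}_u$ decouples componentwise into the scalar equations $\tfrac{h_i\sigma_i}{1+h_i\sigma_i}=\lambda_u\sigma_i$, whose solutions are either $\sigma_i=0$ or $\sigma_i=\tfrac{1}{\lambda_u}-\tfrac{1}{h_i}$, that is, the classical water-filling allocation; the multiplier $\lambda_u\geq 0$ is finally fixed by the total-power constraint.

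This construction shows that for $K=1$ the KKT relation \eqref{eq:KKTcondition} does admit a solution $\B{S}_u\succeq\mathbf{0}$ compatible with $\B{M}\succ\mathbf{0}$, which is precisely what the proposition asserts. I expect the main obstacle to be keeping the positive-definiteness assumption in sight throughout the derivation: it is needed both to justify the Hermitian structure of $\B{M}$ exploited in the diagonalization step and to keep the scalar water-filling equations well-posed, since only indices with $h_i>\lambda_u$ may enter the support of $\B{S}_u$. I would also remark that the argument does not extend to $K>1$, because the matrices appearing in the sum in \eqref{eq:KKTcondition} generally lack a common eigenbasis and no single frequency-flat $\B{S}_u$ can simultaneously diagonalize all of them; this helps motivate the alternative development that follows the proposition.
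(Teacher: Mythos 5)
Your proof is correct and follows essentially the same route as the paper's Appendix~\ref{app:covariance_structure}: both arguments choose $\B{S}_u$ to share the eigenbasis of $\B{H}_{\bar{u}}\B{H}_{\bar{u}}^\He$ so that the matrix KKT relation decouples into scalar equations whose solution is the classical water-filling allocation $\psi_i=\tfrac{1}{\lambda_u}-\tfrac{1}{\gamma_i}$. The one substantive difference is the treatment of a possibly rank-deficient $\B{H}_{\bar{u}}\B{H}_{\bar{u}}^\He$. The paper targets the stronger condition $\bigl(\mathbf{I}+\B{H}_{\bar{u}}\B{H}_{\bar{u}}^\He\B{S}_u\bigr)^{-1}\B{H}_{\bar{u}}\B{H}_{\bar{u}}^\He=\lambda_u\mathbf{I}$ and patches the zero eigenvalues with an auxiliary matrix $\B{L}=\diag(0,\ldots,0,l_{j+1},\ldots,l_R)$ whose entries must satisfy $l_i\gg\lambda_u$, so that part of the condition holds only approximately. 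You instead retain the factor $\B{S}_u$ in \eqref{eq:KKTcondition}, which lets inactive streams take $\sigma_i=0$ exactly, and you observe that the hypothesis $\B{M}\succ\mathbf{0}$ already forces $\B{H}_{\bar{u}}\B{H}_{\bar{u}}^\He$ to be invertible, so the degenerate case never arises under the stated assumptions. Your version is therefore exact where the paper's is approximate, at the cost of not exhibiting (as the paper's $\B{L}$ construction does) how the structure behaves when some equivalent channel gains vanish, a situation the paper implicitly relies on when arguing that \eqref{eq:diagonal} need only hold approximately. Your closing remark on the failure of simultaneous diagonalizability for $K>1$ matches the discussion that follows the proposition in the paper.
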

\begin{proof}\label{proof-th1}
	See Appendix \ref{app:covariance_structure}. 
\end{proof}
It is apparent that finding a frequency-flat matrix $\B{S}_u$ fulfilling Prop. \ref{th1} simultaneously for all the subcarriers is difficult when $K>1$.  Notwithstanding, if these conditions approximately hold for each individual subcarrier, i.e., we select a matrix $\B{S}_u$ which approximates $\left(\mathbf{I}_R+\B{H}_{\bar{u}}[k]\B{H}_{\bar{u}}^\He[k]\B{S}_u\right)^{-1}\B{H}_{\bar{u}}[k]\B{H}_{\bar{u}}^\He[k]$ to a positive-definite matrix for $k=1,\ldots,K$,  the sum  $\sum_{k=1}^K\left(\mathbf{I}_R+\B{H}_{\bar{u}}[k]\B{H}_{\bar{u}}^\He[k]\B{S}_u\right)^{-1}\B{H}_{\bar{u}}[k]\B{H}_{\bar{u}}^\He[k]$ will also approach a positive-definite matrix. For that reason, we propose to select a covariance matrix $\B{S}_u$ satisfying
\begin{equation}
\sum_{k=1}^K\left(\mathbf{I}_R+\B{H}_{\bar{u}}[k]\B{H}_{\bar{u}}^\He[k]\B{S}_u\right)^{-1}\B{H}_{\bar{u}}[k]\B{H}_{\bar{u}}^\He[k]\approx \B{Z}_u\B{D}_u\B{Z}_u^\He,%
\label{eq:diagonal}
\end{equation} 
where $\B{D}_u$ is a diagonal matrix and $\B{Z}_u$ is a unitary matrix. In this case, the KKT condition in \eqref{eq:KKTcondition} will approximately be met.

Applying this procedure iteratively to update the covariance matrix for each of the users will increase the performance metric, as we discuss in Section \ref{sec:Alg}. Moreover since $\B{S}_u,\forall u$  are frequency-flat matrices, if these transmit covariance matrices are also feasible solutions of \eqref{eq:rankFormulationUL}, that they actually provide a lower bound for the achievable performance in \eqref{eq:rankFormulationUL}. 

\subsection{Frequency-selective power allocation}
\label{eq:power_alloc}

As explained in Prop. \ref{th1}, it is advisable to rewrite the $u$-th user frequency-flat covariance matrix as $\B{S}_u=\B{U}_u\B{\Psi}_u\B{U}_u^\He$, with $\B{U}_u$ unitary and $\B{\Psi}_u$ diagonal,  to deal with the requirement of a common analog precoder for all the subcarriers. In addition, utilizing a common basis $\B{U}_u$ for the $K$ subcarriers facilitates achieving a low rank condition for $\B{W}$ in \eqref{eq:rankFormulationUL}.  
However, taking into account the flexibility provided by the frequency-selective baseband precoders, it is reasonable to independently adjust the diagonal matrix $\B{\Psi}_u$ for each subcarrier. That is, we propose to design the frequency-selective covariance matrices $\B{S}_u[k]$ as  
\begin{equation}
\B{S}_u[k]=\B{U}_u\B{P}_u[k]\B{U}_u^\He,
\label{eq:S}
\end{equation}
with $\B{P}_u[k]\succeq\mathbf{0}$ a diagonal matrix with the power allocation for user $u$ and subcarrier $k$. The motivation for the design strategy given by \eqref{eq:S} is threefold:
\begin{enumerate}
	\item The similarity among the row subspaces for channel matrices corresponding to neighboring subcarriers is leveraged. Accordingly, $\B{U}_u$ is a basis used to span a vector space and $\B{P}_u[k]$ selects some of the columns to define an appropriate subspace for each subcarrier $k$. Thus, if channel subspace similarity is strong, \eqref{eq:diagonal} approximately holds individually for most of the subcarriers. 
	
	\item The conversion matrices in \eqref{eq:UL-DLconv} depend on the structure of the channel matrices (cf. \eqref{eq:ABduality}). As a consequence, the similarity among the channel matrix subspaces, together with the use of the common basis $\B{U}_u$, enables us to obtain a low rank $\B{W}$ in \eqref{eq:rankFormulationUL} even when $K$ is large, i.e., 
	\begin{equation}
	\rank(\B{W})\approx	\rank([\B{\Delta}_1[K_c]\B{U}_1,\ldots,\B{\Delta}_U[K_c]\B{U}_U]),
	\label{eq:rankApprox}
	\end{equation}
	where the approximation comes from considering $\B{\Delta}_u[k]\approx\B{\Delta}_u[K_c]$, $\forall k, u$, where $K_c$ is the central subcarrier. This subcarrier is the one presenting the minimum distance, in terms of frequency offset, with respect to the most distant subcarriers. Therefore, we can expect that the spatial features of $\B{\Delta}_u[k]$ with $k\neq K_c$ are similar to those of $\B{\Delta}_u[K_c]$. 
	
	\item Recall from \eqref{eq:problFormOriginal} that the transmit power in a hybrid architecture is distributed via the baseband precoders $\B{P}^u_\text{BB}[k]$. Hence, we can use the frequency-selective matrices $\B{P}_u[k]$ in \eqref{eq:S} to adjust the rank of the covariance matrices $\B{S}_u[k]$ by selecting the more promising per-user streams in a per-subcarrier basis.
\end{enumerate}

\vspace*{0.1cm}
According to the spatial characteristics described in Sec. \ref{sec:flatStructure}, we will use the covariance matrices in \eqref{eq:S} to address the optimization problem in \eqref{eq:rankFormulationUL} by considering the relaxed restriction in \eqref{eq:rankApprox}. Thus, we will address the problem by assuming that the rank constraint is imposed by means of the right term in \eqref{eq:rankApprox}.
In this context, it is still necessary to determine the power allocation matrices $\B{P}_u[k]$. It is worth remarking that the power allocation policy must be designed to effectively ensure that
\begin{align*}
	\rank(&[\B{\Delta}_1[K_c]\B{S}_1[1]\B{\Delta}_1^H[K_c],
	\ldots,\\&\B{\Delta}_U[K_c]\B{S}_U[K]\B{\Delta}_U^H[K_c]) \leq N_{\text{RF}},
\end{align*}
while searching for maximizing the sum-rate in \eqref{eq:rankFormulationUL}. In the following, we propose a strategy for obtaining $\B{P}_u[k]$  when $\B{U}_u$ is already given for $u=1,\ldots,U$.   

Let us start introducing the spatially truncated channels $\B{H}_{\bar{u}}{(\beta)}[k]=\left(\mathbf{I}_R-\sum_{n=1}^N\B{u}_{u,i_n}\B{u}_{u,i_n}^\He\right)\B{H}_{\bar{u}}[k]$, where $\B{u}_{u,c}$ denotes the $c$-th column of $\B{U}_u$. Recall that $ \B{H}_{\bar{u}}[k]$ represents the $u$-th user effective channel for subcarrier $k$. To define  the column index set $\mathcal{I}=\{i_1, \ldots, i_N\}$, consider the entries of the diagonal matrix
\begin{equation}
	\B{\Sigma}_u[k]=\diag(\B{U}_u^{\He}\B{H}_{\bar{u}}[k]\B{H}_{\bar{u}}^{\He}[k]\B{U}_u)
	\label{eq:channelGains}
\end{equation} 
as the potential equivalent channel gains for user $u$ at subcarrier $k$.  Hence, we build $\mathcal{I}$ by establishing priorities for the columns of the basis $\B{U}_u$ as
	\begin{equation}
	\begin{cases}
	i\in\mathcal{I}, & \text{if } \prod_{k=1}^K\big[\B{\Sigma}_u[k]\big]_{i,i}\geq\beta,\\
	i\notin\mathcal{I}, & \text{otherwise}.
	\end{cases}
	\end{equation}
That is, the column $i$ of  the matrix $\B{U}_u$ is selected when the product of the potential gains across all the subcarriers is larger than some threshold $\beta$. Otherwise, the stream associated to such a column is discarded for the whole frequency band. Accordingly, the value of $\beta$ is closely related to the rank expression in \eqref{eq:rankApprox}.  Building on this idea, we propose to find the power allocation matrices $\B{P}_u[k]$ $\forall u,k$ for given $\B{U}_u$ $\forall u$ by solving the optimization problem 
\begin{align}
&\hspace*{-.3cm} \max_{\{\B{P}_u[k]\}_{u,k=1}^{U,K}} \sum_{k=1}^K\sum_{u=1}^U\log_2\det\left(\mathbf{I}_M+(\B{V}_u{(\beta)}[k])^\He\B{P}_u[k]\B{V}_u{(\beta)}[k]\right)\;\notag\\
 &\text{s.t.} ~~\B{P}_u[k]\in\mathcal{D}^+ ,\forall u,k ~~\text{and} ~~\sum_{k=1}^K\sum_{u=1}^U\trace(\B{P}_u[k])\leq P_\text{tx},
\label{eq:ModifiedAlg}
\end{align}
where $\B{V}_u{(\beta)}[k]=\B{U}_u^\He\B{H}_{\bar{u}}{(\beta)}[k]$ and $\mathcal{D}^+$ is the set of diagonal matrices with non-negative entries. Note that the problem in \eqref{eq:ModifiedAlg} uses the spatial truncated channels $\B{H}_{\bar{u}}{(\beta)}[k]$ to address the rank constraint, which leads to a significant reduction in the number of equivalent channel gains considered for the power allocation procedure. Accordingly, a value of $\beta$ high enough entails power allocation matrices $\B{P}_u[k]$ such that the relaxed rank constraint in \eqref{eq:rankApprox} is satisfied. However, since the optimal value of $\beta$ is unknown beforehand, its determination will be included in our algorithmic solution.
Notice also that the covariance matrices obtained from  \eqref{eq:ModifiedAlg} are subject to more restrictive constraints than those in \eqref{eq:rankFormulationUL}, due to the imposed frequency-flat basis $\B{U}_u$. 

In summary, to tackle the optimization problem in \eqref{eq:rankFormulationUL}, we impose the particular structure for the user covariance matrices $\B{S}_u[k]$ in \eqref{eq:S}, according to the KKT conditions for the frequency-flat basis $\B{U}_u$. Next, we propose the optimization problem \eqref{eq:ModifiedAlg} to determine the frequency-selective $\B{P}_u[k]$ power allocation matrices.  In the ensuing section, we describe the proposed algorithm to implement this procedure.

\section{Proposed algorithm}
\label{sec:Alg}
In this section we develop an algorithmic approach to solve the problem in \eqref{eq:rankFormulationUL}. At each iteration $\ell$, the structure of the covariance matrices $\B{S}_u^{(\ell)}[k]$ is updated considering the matrices for the remaining users fixed, according to the strategy explained in Sec. \ref{sec:flatStructure}. To that end, the frequency-flat basis $\B{U}_u^{(\ell)}$ is determined as the \ac{LSV} of the matrix $[\B{H}_{\bar{u}}^{(\ell)}[1], \B{H}_{\bar{u}}^{(\ell)}[2], \ldots,  \B{H}_{\bar{u}}^{(\ell)}[K]]$, with the aim of capturing the common structure for the effective channels corresponding to user $u$ across the whole bandwidth. Note that this approach aims at satisfying the condition in \eqref{eq:diagonal}. Indeed, when the row subspaces for the equivalent channels are similar, the condition approximately holds. Moreover, assuming similar row subspaces for neighboring subcarriers is reasonable in the proposed scenario \cite{GaDaWaCh15,VeGoHe19,venugopal2017}. In the following, we will omit the super-index $(\ell)$ for notation simplicity. 

Next, in order to determine the power distribution among the users and subcarriers, i.e., $\B{P}_u[k]$ in \eqref{eq:S} for $u = 1, \ldots, U$  and $k = 1, \ldots, K$, we compute the equivalent channel gains  $\B{\Sigma}_u[k]$ in \eqref{eq:channelGains}. Note also that we propose to determine the power allocation according to \eqref{eq:ModifiedAlg}. Therefore, it is necessary to perform a search on $\beta$ and solve \eqref{eq:ModifiedAlg} to obtain a power allocation that maximizes the objective function and fulfills the rank constraint on $\B{W}$. To avoid this time-consuming search, we propose an alternative strategy. For that, it is important to remark that the objective function of \eqref{eq:ModifiedAlg} is upper bounded as follows
\begin{align}
		\sum_{u=1}^U\sum_{k=1}^K\log_2\det\left(\mathbf{I}_M+(\B{V}_u{(\beta)}[k])^\He\B{P}_u[k]\B{V}_u{(\beta)}[k]\right)\notag\\
		\leq\sum_{u=1}^U\log_2\prod_{k=1}^{K}\prod_{i=1}^{R}\left(1+\big[\B{P}_{u}[k]\big]_{i,i}\big[\B{\Sigma}_u[k]\big]_{i,i}\right),\label{eq:powerProblmUpper}
\end{align}
where $\big[\B{P}_{u}[k]\big]_{i,i}$ represents the $i$-th entry of the power allocation matrix $\B{P}_u[k]$.
Hence, we can use an iterative procedure where the largest equivalent gain, considering all the subcarriers jointly, is selected at each iteration. That is, $(j,i)=\max_{u,r}[\B{\Upsilon}_u]_{r,r}$, with $\B{\Upsilon}_u=\prod_{k=1}^K\B{\Sigma}_u[k], ~u=1, \ldots, U$. In this way, the tuples of indices $(j,i)$ corresponding to the largest wideband channel gains are sequentially added to the set $\mathcal{S}$, such that the index $j$ refers to the selected user and the index $i$ to the corresponding  stream. %
Therefore, the $i$-th column of $\B{U}_j$ is considered as an adequate candidate to allocate a certain amount of power in the subsequent design of $\B{P}_u[k]$. Furthermore, in order to account for the actual rank restriction in the downlink design, the  rank constraint is evaluated over the columns of $\B{\Delta}_j[K_c][\B{U}_j]$, that is, over the dual downlink covariance matrix given by the right term in \eqref{eq:rankApprox}. As mentioned, we will only consider the conversion matrix for the central subcarrier $\B{\Delta}_j[K_c]$. 

The procedure stops when $N_\text{RF}$ linearly independent vectors are selected, i.e., when $|\mathcal{S}|\geq N_\text{RF}$, and hence
\begin{align}
\hspace*{-0.1cm}\rank([\B{\Delta}_{j_1}[K_c]\big[\B{U}_{j_1}\big]_{:,i_1},\ldots,\B{\Delta}_{j_{|\mathcal{S}|}}[K_c]\big[\B{U}_{j_{|\mathcal{S}|}}]_{:,i_{|\mathcal{S}|}}])=N_\text{RF},
\nonumber
\end{align}
with the set $\mathcal{S} = \{(j_1, i_1), \ldots, (j_{|S|}, i_{|S|})\}$.
This assignment strategy allows us to obtain solutions for a relaxed version of the original formulation in \eqref{eq:rankFormulationUL} where the  the rank constraint is the right term  in \eqref{eq:rankApprox}. With the proposed scheme, we seek to minimize the loss of the posterior factorization procedure to obtain the hybrid precoder components.

Once the set $\mathcal{S}$ is determined, we calculate the power allocation vector $\B{p}\geq \mathbf{0}$ by using waterfilling over the equivalent channel gains $\big[\B{\Sigma}_j[k]\big]_{i,i}$ in \eqref{eq:channelGains}, just for the candidate tuples $(j,i)\in\mathcal{S}$. Using $\B{p}$, we compute the frequency-selective power allocation matrices $\B{P}_u[k]\;\forall u,k$, allowing us a flexible power distribution among the users and subcarriers. Indeed, the matrices $\B{P}_u[k]$ would switch off some of the users and/or subcarriers if their equivalent channel gains are poor.

An important remark is  that, when the KKT condition in \eqref{eq:KKTcondition} is satisfied, the upper bound of  \eqref{eq:powerProblmUpper} is tight. In that case, the power allocation procedure is hence equivalent to performing the search on $\beta$ and solve \eqref{eq:ModifiedAlg} for that value.  
\begin{algorithm}[!t]
	\caption{Rank-Constrained Coordinate Ascent (RCCA)}
	\label{alg:iterWfillRank}
	\small
	\begin{algorithmic}[1]
		\STATE Initialize: $\ell \leftarrow 0$,   $\{\B{S}_u^{(0)}[k]\}_{u=1,k=1}^{U,K}$, $K_c \leftarrow \lfloor K/2 \rfloor$, 
		\REPEAT 
		\FORALL{$u\in\{1,\ldots,U\}$}
		\STATE $\B{H}_{\bar{u}}^{(\ell)}[k]\leftarrow$ Update with $\B{H}_{\bar{u}}^{(\ell)}[k]$ from \eqref{eq:perUserRate}, $\forall k$%
		\STATE $\B{\Delta}_u^{(\ell)}[k]\leftarrow$ Compute using  \eqref{eq:UL-DLconv}, $\forall k$
		\STATE $\B{U}_u^{(\ell)}\leftarrow$ LSV of $[\B{H}_{\bar{u}}^{(\ell)}[1], \B{H}_{\bar{u}}^{(\ell)}[2], \ldots,  \B{H}_{\bar{u}}^{(\ell)}[K]]$ 
		\STATE $\B{\Sigma}_u^{(\ell)}[k]\leftarrow \diag(\B{U}_u^{(\ell),\He}\B{H}_{\bar{u}}^{(\ell)}[k]\B{H}_{\bar{u}}^{(\ell),\He}[k]\B{U}_u^{(\ell)})$, $\forall k$
		\STATE $\B{\Upsilon}_u^{(\ell)}\leftarrow \prod_{k=1}^K\B{\Sigma}_u^{(\ell)}[k]$
		\ENDFOR
		\STATE $r^{(\ell)}\leftarrow 0$, $\B{T}^{(\ell)}\leftarrow []$, $\mathcal{S}^{(\ell)}\leftarrow\emptyset$
		\WHILE {$r^{(\ell)}< N_\text{RF}$}
		\STATE $(j,i)\leftarrow\argmax_{j,i}\; [\B{\Upsilon}_j^{(\ell)}]_{i,i}$
		\STATE $\B{T}^{(\ell)}\leftarrow [\B{T}^{(\ell)} ~~ \B{\Delta}_j^{(\ell)}[K_c][\B{U}_j^{(\ell)}]_{:,i}]$
		\STATE $r^{(\ell)}\leftarrow \rank(\B{T}^{(\ell)})$
		\STATE $\mathcal{S}^{(\ell)}\leftarrow\mathcal{S}^{(\ell)}\cup\{(j,i)\}$  
		\STATE $[\B{\Upsilon}_j^{(\ell)}]_{i,i}\leftarrow 0$
		\ENDWHILE
		\STATE $\B{p}^{(\ell)}\leftarrow$ waterf. over $[\B{\Sigma}_j^{(\ell)}[k]]_{i,i},\forall k$ with $(j,i)\in\mathcal{S}^{(\ell)}$%
		\STATE $\{\B{S}_u^{(\ell+1)}[k]\}_{u=1}^U\leftarrow$ Compute using \eqref{eq:S} with $\B{U}_u^{(\ell)}, \B{p}^{(\ell)}$  
		\STATE $\ell \leftarrow \ell+1$
		\UNTIL{Convergence criterion is met}
		\STATE $\{\B{\Theta}_u[k]\}_{u=1,k=1}^{U,K} \gets$ Prec. matrices for $\{\B{S}_u^{(\ell)}[k]\}_{u=1,k=1}^{U,K}$
		\STATE $\B{P}_\text{RF}$, $\{\B{P}_\text{BB}^u[k]\}_{u=1,k=1}^{U,K}\gets$  \cite[Alg. 1]{gonzalez2018channel} with $\{\B{\Theta}_u[k]\}_{u=1,k=1}^{U,K}$
	\end{algorithmic}
\end{algorithm}

The proposed method, which we will term \ac{RCCA}, is a variation of a block coordinate ascent algorithm where the structure of the covariance matrices are independently determined by considering the remaining ones as fixed, and computing the power allocation jointly for all of them. This coupling in the power restriction, together with the absence of convex and closed properties for the feasible set imposed by the rank restriction, makes impossible to ensure monotonically increasing iterations \cite{Be99}. Nevertheless, it is possible to employ numerical techniques to ensure the convergence to a local optimum, like those employed in \cite{JiReViGo05}. Still, numerical results show good convergence behavior for this insightful algorithm.     
	
The proposed algorithm implicitly selects the number of data streams transmitted to each user at each subcarrier according to the power allocation procedure, considering the largest equivalent channel gains and satisfying the rank constraint with equality, according to the ensuing proposition
\begin{Proposition}\label{th2}
	The optimum overall transmit covariance matrix $\B{Q}$ which maximizes the achievable sum-rate in (6) satisfies $\rank(\B{Q})= N_\text{RF}$.
\end{Proposition}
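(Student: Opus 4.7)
The plan is to argue by contradiction. Suppose the maximizer $\B{Q}^\star$ of \eqref{eq:rankFormulation} satisfies $r := \rank(\B{Q}^\star) < N_\text{RF}$; I will exhibit a feasible $\tilde{\B{Q}}$ with $\rank(\tilde{\B{Q}}) \le N_\text{RF}$ that strictly improves the sum-rate, contradicting optimality. Since downlink and dual virtual uplink sum-rates coincide via the conversion \eqref{eq:UL-DLconv} and $\rank(\B{Q}) = \rank(\B{W})$, I would carry out the argument in the uplink formulation \eqref{eq:rankFormulationUL}, where the objective $R^U = \sum_{k}\log_2\det(\mathbf{I}_M + \sum_u\B{H}_u^\He[k]\B{S}_u[k]\B{H}_u[k])$ is jointly concave in $\{\B{S}_u[k]\}$ and strictly monotone in the positive semidefinite order along any direction that couples non-trivially to at least one channel $\B{H}_u[k]$.

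Let $\mathcal{V}\subseteq\mathbb{C}^M$ denote the column span of $\B{W}^\star$, so $\dim\mathcal{V} = r < N_\text{RF} \le M$. Under the standing non-degeneracy of the aggregate channel, $\bigcap_{u,k}\ker\B{H}_u[k]$ is a proper subspace of $\mathbb{C}^M$, and I can select a unit vector $\B{v}\in\mathcal{V}^\perp$ with $\B{H}_{u_0}[k_0]\B{v}\ne\mathbf{0}$ for some $(u_0,k_0)$. I then transfer a small amount $\epsilon>0$ of power from the weakest active eigen-direction $\B{w}$ of $\B{S}_{u_0}^\star[k_0]$ into $\B{v}\B{v}^\He$ by setting
\begin{equation*}
\tilde{\B{S}}_{u_0}[k_0] = \B{S}_{u_0}^\star[k_0] - \epsilon\,\B{w}\B{w}^\He + \epsilon\,\B{v}\B{v}^\He,
\end{equation*}
leaving the remaining $\B{S}_u^\star[k]$ unchanged. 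For $\epsilon$ below the corresponding eigenvalue, the perturbation stays positive semidefinite, preserves the total power, and enlarges the column span by exactly one dimension, so that $\rank(\tilde{\B{W}}) = r+1 \le N_\text{RF}$ keeps the rank constraint satisfied.

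The remaining step, and the main obstacle, is to show that the first-order change in $R^U$ is strictly positive. Setting $\B{K}_{k_0}^\star = \mathbf{I}_M + \sum_u\B{H}_u^\He[k_0]\B{S}_u^\star[k_0]\B{H}_u[k_0]$ and $\B{G} = \B{H}_{u_0}[k_0](\B{K}_{k_0}^\star)^{-1}\B{H}_{u_0}^\He[k_0]$, the log-det expansion yields
\begin{equation*}
\left.\frac{dR^U}{d\epsilon}\right|_{\epsilon=0} = (\log_2 e)\bigl(\B{v}^\He\B{G}\B{v} - \B{w}^\He\B{G}\B{w}\bigr).
\end{equation*}
I would pick $\B{v}$ as a top eigenvector of the restriction of $\B{G}$ to $\mathcal{V}^\perp$, so that $\B{v}^\He\B{G}\B{v}$ is maximized over admissible new directions. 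Should even this choice fail to exceed $\B{w}^\He\B{G}\B{w}$ for every candidate $\B{w}$, then $\B{Q}^\star$ would satisfy the KKT conditions of the rank-unconstrained version of \eqref{eq:rankFormulationUL} and therefore be its optimum; but under the standing operating assumption $N_\text{RF}\le\dim\spann\{\B{H}_u^\He[k]\}_{u,k}$, the unconstrained multi-subcarrier sum-rate maximizer has rank at least $N_\text{RF}$ by standard joint-waterfilling, contradicting $r<N_\text{RF}$. Either horn of the alternative produces a contradiction, so $\rank(\B{Q}^\star)=N_\text{RF}$.
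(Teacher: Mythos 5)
Your strategy is genuinely different from the paper's, and unfortunately it has two concrete gaps. The paper does not argue by perturbing a low-rank optimum upward; instead, Appendix B of the paper argues the complementary direction: it assumes a solution already using $N_s=N_\text{RF}$ streams with the structure \eqref{eq:S}, allocates one additional stream to a user $j\notin\mathcal{U}$ that is forced to reuse the existing subspace (cf. \eqref{eq:convMatrRank}) so as to keep the rank at $N_\text{RF}$, and shows via the determinant inequality $\det(\mathbf{I}+(\mathbf{I}+\B{B})^{-1}\B{A})\leq\det(\mathbf{I}+\B{A})$ that the resulting sum-rate cannot increase. So the paper bounds the loss from exceeding $N_\text{RF}$ streams, whereas you try to show that using fewer than $N_\text{RF}$ rank dimensions is strictly suboptimal.

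The first gap is dimensional and structural. You take $\B{v}$ in $\mathcal{V}^\perp\subseteq\mathbb{C}^M$ (the orthogonal complement of the column span of $\B{W}^\star$) but then use it to perturb the uplink covariance $\B{S}_{u_0}[k_0]$, which is $R\times R$, and to form the quadratic form $\B{v}^\He\B{G}\B{v}$ with $\B{G}=\B{H}_{u_0}[k_0](\B{K}_{k_0}^\star)^{-1}\B{H}_{u_0}^\He[k_0]\in\mathbb{C}^{R\times R}$. The rank constraint lives on $\B{W}$, whose blocks are $\B{\Delta}_u[k]\B{S}_u[k]\B{\Delta}_u^\He[k]$ with $\B{\Delta}_u[k]\in\mathbb{C}^{M\times R}$; an admissible perturbation direction in the uplink is an $R$-vector whose image under $\B{\Delta}_{u_0}[k_0]$ may or may not leave $\mathcal{V}$, and, worse, the conversion matrices in \eqref{eq:UL-DLconv}--\eqref{eq:ABduality} themselves depend on all the covariances, so perturbing $\B{S}_{u_0}[k_0]$ moves the spans of the other blocks of $\B{W}$ as well. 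Your claim that the perturbation ``enlarges the column span by exactly one dimension'' is therefore not established. The second gap is the closing dichotomy: the non-existence of an improving direction orthogonal to $\mathcal{V}$ does not imply that $\B{Q}^\star$ satisfies the full KKT system of the rank-unconstrained problem (stationarity must also hold inside $\mathcal{V}$ and across all users and subcarriers), and the assertion that the unconstrained joint-waterfilling optimum has rank at least $N_\text{RF}$ is false in the low-\ac{SNR} regime, where waterfilling concentrates all power on a single eigenmode. Without an explicit operating-regime assumption, neither horn of your alternative yields the contradiction, so the argument does not close.
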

\begin{proof}
	See Appendix \ref{app:streams}.
\end{proof}

To determine the overall computational complexity of Alg. \ref{alg:iterWfillRank}, we analyze the main steps of the procedure. The computational cost of obtaining the uplink-downlink conversion matrices in \eqref{eq:UL-DLconv} (steps 4 and 5), is bounded by $\mathcal{O}\left(M^2R\right)$ due to the computation of the matrix products, while the \ac{SVD} in step 6 and the product in step 7 present complexity orders of $\mathcal{O}\left(KMR^2\right)$ and $\mathcal{O}\left(MR^2\right)$, respectively. Next, finding the number of linearly independent columns in $\B{T}^{(\ell)}$ (step 14) is computationally cheap  with order $\mathcal{O}\left(MN_\text{RF}\right)$, whereas determining the uplink covariance matrices in step 19 presents a cost of $\mathcal{O}\left(R^3\right)$. Finally, obtaining the downlink precoding matrices of step 22 requires again computing matrix products, with a cost of about $\mathcal{O}\left(M^2R\right)$, and $\mathcal{O}\left(MN^2_\text{RF}\right)$ operations are necessary for the factorization procedure of step 23. If we consider only the dominant computational components, we obtain an overall computational complexity of $\mathcal{O}\left(\operatorname{max}\{KMR^2,M^2R\}\right)$.

\section{Simulation results}

In this section, the results of computer experiments carried out to illustrate the performance of the proposed \ac{RCCA} algorithm are presented. We also make comparisons with four benchmark approaches, namely: 1) hybrid \ac{LISA} \cite{gonzalez2019hybrid}, 2) the alternating minimization (AM)-based algorithm proposed in \cite{Yuan19}, 3) the hybrid precoding design of \cite{sohrabi2017hybrid}, and 4) a wideband extension of the \ac{ITA} proposed in \cite{GoFrCa21} for narrowband scenarios. The first one was chosen because it provides superior performance to traditional approaches for wideband hybrid precoding. The second and third solutions represent low-complexity alternatives for wideband hybrid precoding. Finally, the latter is a rank-constrained approach that iteratively solves a series of convex problems (cf. \cite{GoFrCa21}). In its wideband extension,  \ac{ITA} selects the best $N_{\text{RF}}$-dimensional subspace from the common space spanned by all the subcarriers, although the working principles significantly differ from the \ac{RCCA} algorithm. In addition, we assume successive interference cancellation at reception for all the approaches. 

\begin{table}[t]
	\centering
	\caption{{Simulation parameter settings.}}\label{tab:Sim1}
	\setlength{\tabcolsep}{5pt}
	\def\arraystretch{1.5}
	\begin{tabular}{|l|l|}
		\hline		
		{\textbf{Parameter}} & {\textbf{Value}}  				\\ \hline\hline
		Number of users & 		$U=6$ 	\\ \hline
		Central subcarrier frequency & 		$f_c=5$ GHz 		\\ \hline
		Bandwidth & 		$B=200$ MHz 		\\ \hline
		Number of \ac{BS} antennas & 			$M=32$ 				\\ \hline
		Distance between antennas & $d=\frac{\lambda}{2}$ m 	\\ \hline
		Number of antennas at the users & 	$R=2$  						\\ \hline
		Delay taps & 				 	$D=8$							\\ \hline
		Number of subcarriers &		$ K=64$ \\ \hline
		Number of RF chains &	 $N_{\text{RF}}=4$  \\ \hline
	\end{tabular}
\end{table}

\begin{figure}[t!]
	\centering
	\includegraphics[width=0.99\linewidth]{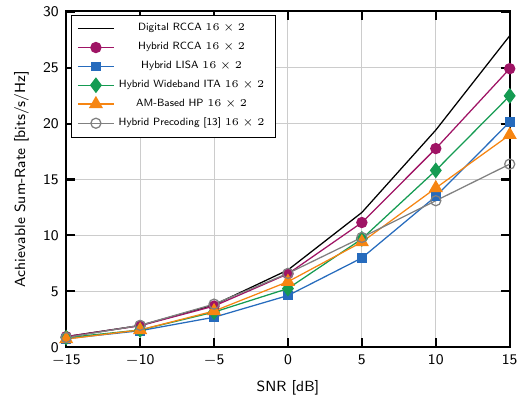}
	\caption{Achievable sum-rates of the different algorithms for  a downlink scenario with the following configuration: $M=16, R=2, U=6, N_{\text{RF}}=4, N_p = 16$ and $K=16$.}
	\label{fig:sumRates_differentMethods}
\end{figure}

Unless explicitly stated, the main simulation parameters are those in Table \ref{tab:Sim1}. The results were averaged over $200$ channel realizations generated according to the geometrical channel model described in \eqref{eq:channelModelFreq}. For all the experiments, we initialize the covariance matrices $\{\B{S}_u^{(0)}[k]\}_{u=1,k=1}^{U,K}$ in Alg. 1 to the scaled identity matrix, such that the power constraint holds with equality.

Figures \ref{fig:sumRates_differentMethods} and \ref{fig:sumRates_differentMethods2} show the system performance of the different approaches for two downlink setups. In particular, the setup in Fig. \ref{fig:sumRates_differentMethods} uses $M=16$, $N_\text{RF}=4$, and $K=16$ to enable comparisons with
the high-complexity ITA solution. We have included the achievable sum-rates obtained with both the rank-constrained digital precoders provided by the \ac{RCCA} algorithm (Digital \ac{RCCA}) and after their corresponding factorization to obtain the hybrid version (Hybrid \ac{RCCA}). We consider a relatively large number of channel paths, $N_p=\{16,32\}$, as it is more realistic in a practical scenarios, and makes the precoder design more difficult because of the larger differences between the channel responses corresponding to distant subcarriers. Moreover, the multiuser interference also increases with this parameter. As observed, the proposed \ac{RCCA} algorithm outperforms the other  approaches, especially for medium and high \ac{SNR} regimes. The superior performance of \ac{RCCA} with respect to Hybrid \ac{LISA}  increases for the largest \ac{SNR} values in both setups. This is because \ac{RCCA} considers the interference from all the other users at each algorithm step, whereas LISA only selects one candidate user at each step and removes the interference caused by that user to the previously selected ones. \ac{LISA} is suitable for low scattering scenarios where user subcarriers experience similar channel responses but quickly consumes the available spatial degrees of freedom. Hybrid wideband \ac{ITA} achieves an intermediate performance at the expense of higher complexity. On the other hand, the AM-based and the hybrid precoding procedure of \cite{sohrabi2017hybrid}  are competitive at the low \ac{SNR} regime but they clearly degrade as the \ac{SNR} increases. For the AM-based alternative, this is due to the use of the \ac{MRT} strategy that ignores interference, and the hindrance of imposing a per-subcarrier power constraint. In the case of the method of \cite{sohrabi2017hybrid}, the authors again rely on an analog precoding design that neglects interference, thus limiting its usability in mid and high \ac{SNR} scenarios. As observed, the performance degradation of this method for high SNR values is particularly remarkable as the number of transmit antennas and subcarriers grows.

\begin{figure}[t!]
	\centering
	\includegraphics[width=0.99\linewidth]{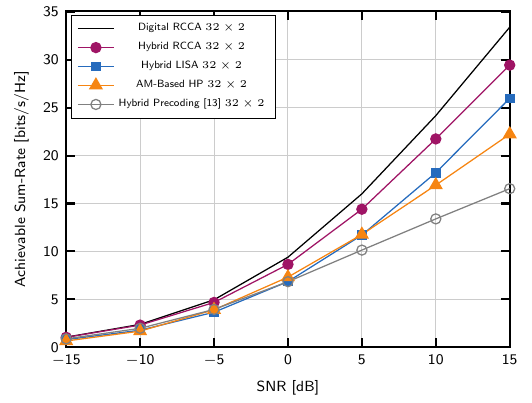}
	\caption{Achievable sum-rates of the different algorithms for a downlink scenario with the following configuration: $M=32, R=2, U=6, N_{\text{RF}}=4, N_p = 32$ and $K=64$.}
	\label{fig:sumRates_differentMethods2}
\end{figure}

 Comparing Digital \ac{RCCA} and Hybrid \ac{RCCA}, we observe that the performance loss caused by the factorization is relatively small. This confirms that the assumption in \eqref{eq:rankApprox} holds in practice, i.e., the use of a common structure to limit the rank of the solutions leads to covariance matrices with approximately the same rank in the downlink. Larger losses for higher \ac{SNR} values are related to the limitations of the factorization algorithms to keep the structure of the digital counterpart \cite{rusu2016low}.

A second computer experiment was carried out to evaluate the impact on performance of the number of channel paths $N_p$. Figure \ref{fig:comparison_bypaths} shows the achievable sum-rates for \ac{RCCA}, Hybrid LISA and AM-based for the setup in Table \ref{tab:Sim1} and different number of paths $N_p \in \{4,8,16,24,32\}$. In addition, two different \ac{SNR} values are considered, namely $-5$ dB and $10$ dB. 
As observed, \ac{RCCA} provides the highest achievable sum-rate even for those situations more favorable to Hybrid \ac{LISA}, i.e., those with smaller number of channel paths $N_p$. In addition, the proposed algorithm efficiently exploits the similarity between the subcarriers even for large values of $N_p$. Note, however, that the performance of all approaches decreases as $N_p$ increases due to the different spatial features among channels for distant subcarriers, and the stronger multiuser interference.

\begin{figure}[t!]
	\centering
	\includegraphics[width=0.99\linewidth]{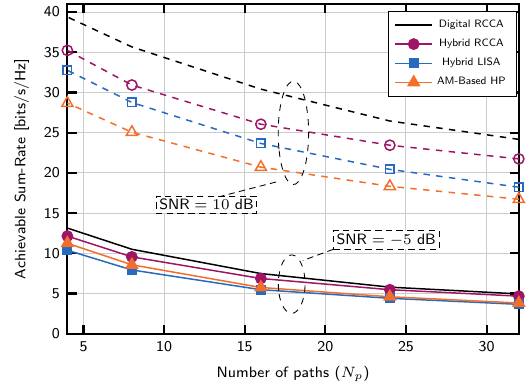}
	\caption{Achievable sum-rates vs. number of reflection paths for the different approaches and two \ac{SNR} values: -5 dB and 10 dB. The configuration setup is: $M=32, R=2, U=6, N_{\text{RF}}=4$ and $K=64$.}
	\label{fig:comparison_bypaths}
\end{figure}

In the next experiment, we are interested in evaluating the behavior of the proposed algorithm depending on the number of RF chains available at the \ac{BS}.  Figure \ref{fig:comparison_byRF} shows the achievable sum-rates of the different approaches considered for the hybrid design of the downlink precoders for a challenging scenario with $U=12$ users and $N_p=32$ paths. We have decided to increase the number of users with respect to Table \ref{tab:Sim1} to better illustrate the dependency on the number of \ac{RF} chains. Like in the previous experiment, we focus on two representative \ac{SNR} values: -5 dB and 10 dB. As observed, the obtained results show the superior performance of the proposed design strategy \ac{RCCA} regardless of the number of \ac{RF} chains. The advantages of using the \ac{RCCA}-based approach are remarkable for the scenario with \ac{SNR} = 10 dB, i.e., high \ac{SNR} regime, and a large number of RF chains. In this case, RCCA provides a gain, in terms of achievable sum-rate, of about 8 bits/s/Hz with respect to Hybrid LISA. Again, this positive behavior of the \ac{RCCA} algorithm is related to a more exhaustive management of the user interference, which is more apparent at the high \ac{SNR} regime. On the other hand, AM-based strategy provides an acceptable performance in the scenario with the lowest \ac{SNR} value but its performance significantly degrades as the \ac{SNR} increases for any number of \ac{RF} chains.

\begin{figure}[t!]
	\centering	\includegraphics[width=0.99\linewidth]{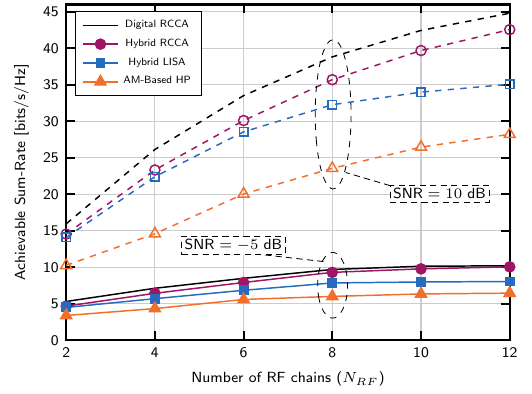}
	\caption{Achievable sum-rates vs. number of RF chains at the BS for the different approaches and two \ac{SNR} values: -5 dB and 10 dB. The configuration setup is: $M=32, R=2, U=12, K=64$ and $N_p = 32$.}
	\label{fig:comparison_byRF}
\end{figure}

\subsection{Performance under practical impairments}
In this section, we evaluate the performance impact caused by two effects that are present in practical setups. More specifically, we first consider an analog hybrid precoding with quantized phases and, second, the more general scenario where the array response vectors are affected by beam squint.

In Fig. \ref{fig:PSresolution} we evaluate the impact of reducing the resolution of the phase shifters employed to build the analog network of the hybrid architecture. Under this assumption, the entries of the analog precoding matrix $\B{P}_\text{RF}$ can only take values from a quantized phases set. In particular, assuming $b$ quantization bits, the available phases are 
$2^b$. According to the obtained results, we can conclude that the proposed method is robust against phase quantization, as the performance loss is moderate even for $b=2$ quantization bits (about 1.5 bits/s/Hz).

\begin{figure}[t!]
	\centering
	\includegraphics[width=0.99\linewidth]{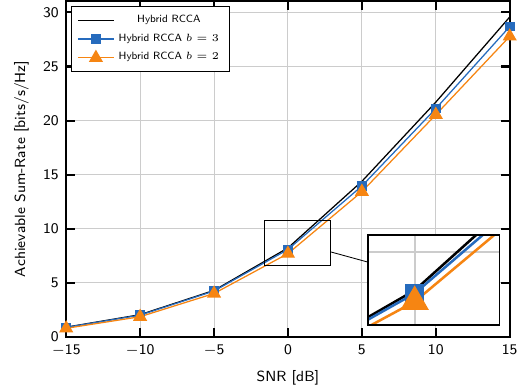}
	\caption{Achievable sum-rates of RCCA for a downlink scenario with the following configuration: $M=32, R=2, U=6, N_{\text{RF}}=4, N_p=32$ and $K=64$ and different number of resolution bits $b={\infty,3,2}$.}
	\label{fig:PSresolution}
\end{figure} 

In the next experiment, we employ the configuration setup of Table \ref{tab:Sim1}  adapting the parameters to a mmWave scenario where  the array response vectors of \eqref{eq:steeringVector} become frequency dependent and the beam squint effect is more remarkable \cite{Wang2019}. In particular, we consider a central frequency $f_c=28$ GHz and different signal bandwidths $B=\{400,800,1600,3200,4000\}$ MHz. From the results observed in Fig. \ref{fig:BS}, we can conclude that the influence of beam squint is  significant for a \ac{SNR} value of $10$ dB and moderate when the \ac{SNR} is $-5$ dB. This is motivated because the spatial similarity of the channel subspaces for different frequencies is compromised by beam squint. Accordingly, the performance losses due to hybrid factorization increase with the bandwidth $B$, as the hybrid approach is less effective when handling multiuser interference.
\begin{figure}[t!]
	\centering
	\includegraphics[width=0.99\linewidth]{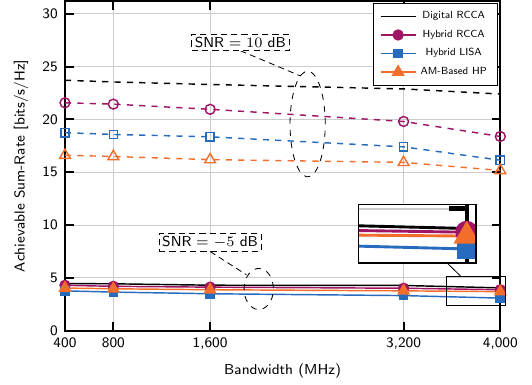}
	\caption{Achievable sum-rates vs. bandwidth for the different approaches and two \ac{SNR} values: -5 dB and 10 dB. The configuration setup is: $M=32, R=2, U=12, N_{\text{RF}}=4, K=64$ and $N_p = 32$.}
	\label{fig:BS}
\end{figure} 

\subsection{Complexity and convergence analysis}

In this subsection, we first compare the computational complexity of the proposed \ac{RCCA} algorithm with respect to the competitors. Table \ref{tablecomplejidad} shows the complexity order of these five wideband hybrid precoding approaches: \textit{I)} The complexity of \ac{RCCA} is mainly determined by the step 6 in Algorithm \ref{alg:iterWfillRank}, where the calculation of the SVD of the matrix which results from stacking the equivalent channels corresponding to all the subcarriers for the $u$-th user is required. The complexity of this step is in the order of $\mathcal{O}(KMR^2)$ for the considered setups. \textit{II)} The most computational expensive operation in Hybrid LISA is the calculation of the precoding candidates to select the best user at each algorithm step \cite{gonzalez2019hybrid}. The complexity order of this operation is $\mathcal{O}\left(\operatorname{min}\{KM^2,K^2M\}\right)$ since all the frequency equivalent channels for each user are considered. \textit{III)} The complexity order of the AM-based method was determined in \cite{Yuan19}, and it comes from the computation of the matrix  product of the digital precoders for all the subcarriers and users times the analog precoding matrix. As observed, the complexity order of the AM-based method  is actually lower than that of the other  approaches since it is linear on the number of transmit antennas and subcarriers. \textit{IV)} The hybrid precoding design of \cite{sohrabi2017hybrid} also presents a computational efficient solution, as the complexity order $\mathcal{O}(M^2)$ comes from the computation of the average covariance matrices for the wideband channels. These products can be obtained in parallel for all the subcarriers. \textit{V)} The wideband extension of ITA has a complexity order significantly higher than the other approaches. This is because the general-purpose solvers for semidefinite programming (SDP) problems are based on interior point methods whose complexity is in the order of $\mathcal{O}(mn^2 + n^3)$ for a SDP problem with $m$ linear matrix inequalities and $n$ variables \cite{sdpt3}.

\begin{table}[t!]
	\centering
	\caption{{Computational complexity of wideband hybrid precoders}.}
	\setlength{\tabcolsep}{5pt}
	\def\arraystretch{1.3}
	\label{tablecomplejidad}
	\begin{tabular}{ll}
		\hline
		\textbf{Algorithm}&  \textbf{Complexity order}\\ \hline
		\ac{RCCA} & {$\mathcal{O}\left(\operatorname{max}\{KMR^2,M^2R\}\right)$} \\ \hline
		Hybrid LISA& $\mathcal{O}\left(\operatorname{min}\{KM^2,K^2M\}\right)$ \\ \hline
		AM-Based HP  & $\mathcal{O} \left( KMUN_{{\text RF}}\right)$ \\ \hline
		Wideband ITA& $\mathcal{O}~(K^3 M^6)$ \\ \hline
		Hybrid Precoding in \cite{sohrabi2017hybrid}& $\mathcal{O}~( M^2)$ \\ \hline
	\end{tabular}
\end{table}

In order to illustrate the influence of the number of antennas $M$ on the overall computational costs of the the hybrid precoder designs considered in this section, we provide in Fig. \ref{fig:complexity} an approximation to the number of operations required for each method in a logarithmic scale. The result shows that Hybrid LISA and the posed RCCA method increase their computational costs faster than other computationally cheap solutions.  In addition, Hybrid Wideband ITA becomes hardly applicable even for moderate numbers of antennas. 

\begin{figure}[!t]
	\centering
	\includegraphics[width=0.99\linewidth]{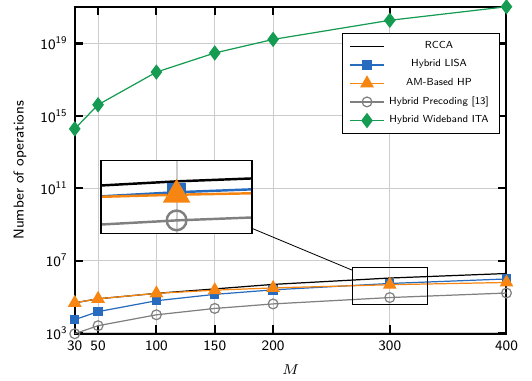}
	\caption{Computational complexity vs. number of antennas at the BS $M$ for the different approaches. The configuration setup is: $R=2$, $U=6$, $N_{\text{RF}}=4$ and $K=64$.}
	\label{fig:complexity}
\end{figure}

Now, we empirically analyze the convergence of the \ac{RCCA} algorithm.
As commented, the particular structure of the user covariance matrices and the non-convex nature of the problem imposed by the rank restrictions make it impossible to theoretically guarantee the convergence of the \ac{RCCA} algorithm. However, the obtained simulation results and the empirical convergence analysis carried out during the computer experiments suggests that the \ac{RCCA} algorithm presents good convergence properties. This is in part motivated by the use of numerical techniques which ensure the convergence of the algorithm to a local optimum, following a similar approach to \cite{JiReViGo05}. An example of this is observed in Figure \ref{fig:convergence} which shows the achievable sum-rates obtained with the \ac{RCCA} algorithm at the end of each iteration for three different \ac{SNR} values, namely -5 dB, 5 dB and 15 dB. As observed, the algorithm rapidly converges after a few iterations in the three \ac{SNR} cases. Usually, the number of iterations required to achieve convergence increases with the number of channel paths $N_p$, the number of users $U$, and the SNR; that is, for scenarios where interference management is difficult. As already mentioned, due to the complicated structure of the search space of the original problem formulation in (4), the optimum achievable sum-rate remains unknown. Thus, it is impossible to check if convergence to the global optimum has been reached.

\begin{figure}[!t]
	\centering	\includegraphics[width=0.99\linewidth]{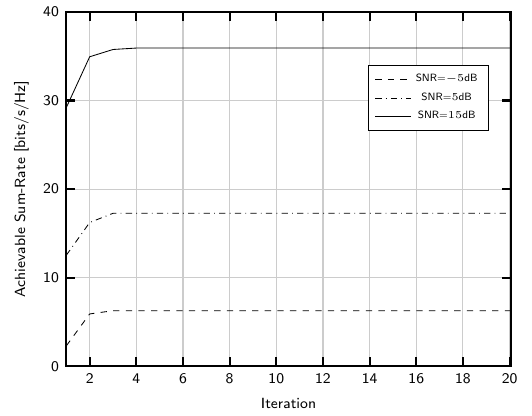}
	\caption{Achievable sum-rates vs. number of iterations for the \ac{RCCA} algorithm and \ac{SNR} values: -5 dB, 10 dB, and 15 dB. The configuration setup is: $M=32, R=2, U=6, N_{\text{RF}}=6$ and $K=64$.}
	\label{fig:convergence}
\end{figure}

\vspace*{-0.3cm}
\section{Conclusions}

An innovative approach to hybrid precoding at the BS of a downlink massive \ac{MIMO} wideband system, termed \ac{RCCA}, has been developed. The approach aims at exploiting the spatial similarity among the different \ac{OFDM} subcarriers by forcing  digital solutions with a low rank. In addition, the inter-user interference is considered to approximate equivalent channel gains which in turn are used to update the transmit covariance matrices.  The numerical experiments show that the proposed \ac{RCCA} approach outperforms both direct extensions of previous rank-constrained designs and other benchmark solutions such as hybrid \ac{LISA} or AM-based algorithms. In this sense, \ac{RCCA} has been shown to be a robust solution for a wide range of practical \ac{SNR} values and scenarios with different number of reflection channel paths.

\section*{Acknowledgments}
This work has been supported in part by grants ED431C 2020/15 and ED431G 2019/01 (to support the Centro de Investigación de Galicia “CITIC”) funded by Xunta de Galicia and ERDF Galicia 2014-2020, and by grants PID2019-104958RB-C42 (ADELE) and PID2020-118139RB-I00 funded by MCIN/AEI/10.13039/501100011033. The authors thank the Defense University Center at the Spanish Naval Academy (CUD-ENM) for all the support provided for this research.
 
\appendices
\section{Frequency-flat covariance matrix structure} \label{app:covariance_structure}
We focus on the case where the covariance matrix has to be designed for a single subcarrier, $K=1$. Then, the KKT condition in \eqref{eq:KKTcondition} is equivalent to
\begin{equation}
	\left(\mathbf{I}+\B{H}_{\bar{u}}\B{H}_{\bar{u}}^\He\B{S}_u\right)^{-1}\B{H}_{\bar{u}}\B{H}_{\bar{u}}^\He-\lambda_u\mathbf{I}=\mathbf{0}.
	\label{eq:singleCarrierCond}
\end{equation}
As observed, if the structure of the covariance matrix $\B{S}_u$ forces the first term \eqref{eq:singleCarrierCond} to be positive definite, then the subspace spanned by the columns of $\B{S}_u$ achieves the condition in \eqref{eq:singleCarrierCond} with equality for certain spectrum of $\B{S}_u$. To show this statement we introduce the decomposition $\B{H}_{\bar{u}}\B{H}_{\bar{u}}^\He = \B{U}\B{\Gamma}\B{U}^\He$, with unitary matrix $\B{U}$ and $\B{\Gamma}=\diag(\gamma_1,\ldots,\gamma_R)$ with $\gamma_i\geq 0$, $\forall i\in\{1,\ldots,R\}$. We consider that the elements $\gamma_1$ to $\gamma_j$  are strictly lager than $0$ without loss of generality. Now, we rewrite the left-hand side of \eqref{eq:singleCarrierCond} as
\begin{align}
	&\left(\mathbf{I}+\B{U}\B{\Gamma}\B{U}^\He\B{S}_u\right)^{-1}\B{U}\B{\Gamma}\B{U}^\He+\B{U}\B{L}\B{U}^\He-\B{U}\B{L}\B{U}^\He-\lambda_u\mathbf{I}\notag\\
	&=\left(\B{U}(\B{\Gamma}+\B{L})^{-1}\B{U}^\He+\B{U}(\B{\Gamma}+\B{L})^{-1}\B{\Gamma}\B{U}^\He\B{S}_u\right)^{-1}\notag\\
	&\quad-\B{U}\B{L}\B{U}^\He-\lambda_u\mathbf{I},
\end{align} 
where we have also introduced the auxiliary diagonal matrix $\B{L}=\diag(0,\ldots,0,l_{j+1},\ldots,l_R)$. Next, we select the following structure for the covariance matrix $\B{S}_u=\B{U}\B{\Psi}\B{U}^\He$ to get 
\begin{align}
&\left(\B{U}\left[(\B{\Gamma}+\B{L})^{-1}+\tilde{\mathbf{I}}\B{\Psi}\right]\B{U}^\He\right)^{-1}-\B{U}(\B{L}+\lambda_u\mathbf{I})\B{U}^\He\notag\\
&=\B{U}\left(\left[(\B{\Gamma}+\B{L})^{-1}+\tilde{\mathbf{I}}\B{\Psi}\right]^{-1}-(\B{L}+\lambda_u\mathbf{I})\right)\B{U}^\He,
\label{eq:aux_step}
\end{align} 
with $\tilde{\mathbf{I}}=\diag(1,\ldots,1,0,\ldots,0)$ having $j$ ones in the diagonal, followed by $R-j$ zeros. In addition, note that $\B{\Psi}=\diag(\psi_1,\ldots,\psi_R)$. The expression in \eqref{eq:aux_step} will be equal to the zero matrix if the two terms inside the brackets cancel each other out, that is, if
\begin{equation*}
	(\B{\Gamma}+\B{L})^{-1}+\tilde{\mathbf{I}}\B{\Psi}=(\B{L}+\lambda_u\mathbf{I})^{-1}. %
\end{equation*}
We now distinguish between the first $j$ elements and the remaining ones, leading to the following conditions
\begin{equation}
	\begin{cases}
		\frac{1}{\gamma_i}+\psi_i=\frac{1}{\lambda_u}\quad&\text{if}\;i\leq j\\
		\frac{1}{l_i}=\frac{1}{l_i+\lambda_u}\quad&\text{otherwise}.
	\end{cases}
\end{equation}
While the first condition can be achieved by setting $\psi_i=\frac{1}{\lambda_u}-\frac{1}{\gamma_i}$, the second condition approximately holds when we choose $l_i\gg \lambda_u$, $\forall i$.  We hence conclude that, selecting the structure of the user covariance matrix as $\B{S}_u=\B{U}\B{\Psi}\B{U}^\He$, the first term in \eqref{eq:singleCarrierCond} results in a positive definite matrix, which is meaningful in a per-carrier basis.

\section{Analysis of the optimality of $N_s=N_\text{RF}$}
\label{app:streams}
In this appendix, we evaluate the effect of including an additional stream when $N_s=N_\text{RF}$ in \eqref{eq:rankFormulation} holds to show that, even under the assumption of perfect decoding, this will lead to a reduction in the achievable sum-rate. %

We consider that the matrices $\B{S}_u[k]$ allocate streams to a certain set of users $\mathcal{U}$ such that the rank of the overall transmit covariance is $N_{\text{RF}}$, while providing the largest equivalent channel gains, according to Sec. \ref{eq:power_alloc}. The achievable sum-rate obtained under the power constraint $\sum_{k=1}^K\sum_{u=1}^U\trace(\B{S}_u[k])\leq P_\text{tx}$ is denoted by $R$, and for simplicity we assume $\sigma_n^2=1$. In this context, if a new stream is allocated to user $j\notin\mathcal{U}$, the achievable sum-rate including the new stream reads as
\begin{equation}
R^\prime=\sum_{u\in\mathcal{U}}R_u^\prime+R_j^\prime,
\end{equation}
where $R^\prime_u$ is the achievable rate for user $u$ when considering the appropriate set of transmit covariance matrices $\B{S}_u^\prime[k]$, and $R^\prime_j$ is the achievable rate for the $j$-th user which is given by
\begin{equation}
R_j^\prime=\sum_{k=1}^K\log_2\det\left(\mathbf{I}+\B{H}_{\bar{j}}^\He[k]\B{S}^\prime_j[k]\B{H}_{\bar{j}}[k]\right),
\end{equation}
with $\B{H}_{\bar{j}}[k]$ being the effective channels. As we keep the rank of the overall transmit covariance matrix constant in the uplink, the $j$-th user covariance matrix is given by
\begin{equation*}
\B{S}_j^\prime[k]=\sum_{u\in\mathcal{U} }\B{V}_u\B{P}_{j,u}[k]\B{V}_u^\He,
\end{equation*}
where $\B{S}_u^\prime[k]=\B{V}_u\B{P}_{u}[k]\B{V}_u^\He$ has the structure in \eqref{eq:S} for all users and subcarriers, with semi-unitary $\B{V}_u\in\mathbb{C}^{M\times N_{s,u}}$, and $\B{P}_{j,u}[k]\succeq\mathbf{0}$ are the diagonal matrices that determine the power allocation for the $j$-th user.

By employing the effective channel  $\B{H}_{\bar{j}}[k]$, the KKT condition derived in App. \ref{app:covariance_structure} can be extended to consider multiple carriers as [cf. \eqref{eq:singleCarrierCond}]
\begin{equation*}
\left(\mathbf{I}+\B{H}_{\bar{j}}[k]\B{H}_{\bar{j}}^\He[k]\B{S}_j^\prime[k]\right)^{-1}\B{H}_{\bar{j}}[k]\B{H}_{\bar{j}}^\He[k]-\lambda_j[k]\mathbf{I}=\mathbf{0}.
\end{equation*}
Note that the equation above cannot be fulfilled in general due to the lack of flexibility in the design of $\B{S}^\prime_j[k]$. The reason behind this claim is that the subspaces spanned by the rows and columns of $\B{S}^\prime_j[k]$ are rotations of those for $\B{V}_u$, and they are not specifically designed for user $j$. Accordingly, the KKT conditions would not be reachable simultaneously for all the users and, therefore, better power distributions would be possible.  This situation remains the same as long as user $j$ is active, i.e., the performance metric can be improved by assigning all the available power to the users in the set $\mathcal{U}$. Therefore, we conclude that $R^\prime\leq R$. 

Moreover, for the sake of completeness, we investigate the impact of the achievable sum-rate for a more practical approach, i.e., when successive cancellation is not considered. In this case, Zero-Forcing (ZF) precoding achieves channel capacity in the high \ac{SNR} regime \cite{YoGo06,GuUtDi09}. Therefore, we consider a set of downlink transmit covariance matrices $\B{Q}_u[k]=\B{V}_u\B{P}_u[k]\B{V}_u^\He$, with semi-unitary $\B{V}_u\in\mathbb{C}^{M\times N_{s,u}}$ and $\B{P}_u[k]\in\mathbb{C}^{N_{s,u}\times N_{s,u}}$, such that the equivalent channel gains for users $u\in\mathcal{U}$ are larger than the gains for the remaining ones $j\notin\mathcal{U}$ for given covariance matrices $\B{Q}_u[k]$, i.e., $\B{H}_u[k]\B{V}_u\B{V}_u^\He\B{H}_u^\He[k]\succeq\B{H}_{\bar{j}}[k]\B{V}_u\B{V}_u^\He\B{H}_{\bar{j}}^\He[k]$, and $\B{H}_i[k]\B{Q}_u[k]\B{H}_i^\He[k]=\mathbf{0},\forall k$, with $u,i\in\mathcal{U}$ and $j\notin\mathcal{U}$. In other words, the set of covariance matrices $\B{Q}_u[k]$, for $u\in\mathcal{U}$, are optimal under the assumption  $N_s=N_\text{RF}$. The achievable sum-rate is then a simplified version of that in \eqref{eq:rankFormulation}, i.e.,
\begin{equation*}
R=\sum_{k=1}^K \sum_{u\in\mathcal{U}}\log_2\det\left(\mathbf{I}+\B{H}_u[k]\B{Q}_u[k]\B{H}_u^\He[k]\right).
\end{equation*} 

When a new stream is allocated to user $j\notin\mathcal{U}$, and with the aim of satisfying the rank constraint, the covariance matrix for user $j$ reads as 
\begin{equation}
\B{Q}_j[k]=\sum_{u\in\mathcal{U} }\B{V}_u\B{P}_{j,u}[k]\B{V}_u^\He.
\label{eq:convMatrRank}
\end{equation}

In this scenario, the zero-forcing condition for user $j$ does not hold, leading to the following achievable rate for the user $u$ and the subcarrier $k$
\begin{align*}
R^\prime_u[k]=\log_2\frac{\det\left(\mathbf{I}+\B{H}_u[k]\big(\B{Q}_u[k]+\B{Q}_j[k]\big)\B{H}_u^\He[k]\right)}
{\det\left(\mathbf{I}+\B{H}_u[k]\B{Q}_j[k]\B{H}_u^\He[k]\right)}.
\end{align*}
Using the definition in \eqref{eq:convMatrRank} and the zero interference assumption, the former expression is rewritten as
\begin{equation*}
\hspace*{-0.1cm}R^\prime_u[k]=\log_2\frac{\det\left(\mathbf{I}+\B{H}_u[k]\B{V}_u\big(\B{P}_u[k]+\B{P}_{j,u}[k]\big)\B{V}_u^\He\B{H}_u^\He[k]\right)}
{\det\left(\mathbf{I}+\B{H}_u[k]\B{V}_u\B{P}_{j,u}[k]\B{V}_u^\He\B{H}_u^\He[k]\right)}.
\end{equation*}
Therefore, since the power available for the users $u\in\mathcal{U}$ is smaller due to the incorporation of the new stream, the achievable sum-rate for $u\in\mathcal{U}$ can be upper bounded as follows
\begin{align}
\sum_{k=1}^K\sum_{u\in\mathcal{U}}R^\prime_u[&k]\leq R \label{eq:boundSumRate}\\- &\sum_{k=1}^K\sum_{u\in\mathcal{U}}\det\left(\mathbf{I}+\B{H}_u[k]\B{V}_u\B{P}_{j,u}[k]\B{V}_u^\He\B{H}_u^\He[k]\right).\notag
\end{align}
For the new user $j$, it is also possible to obtain  upper bounds for $R^\prime_j[k]$ as follows
\begin{align}
R^\prime_j&[k]=\log_2\det\left(\mathbf{I}+\B{H}_{\bar{j}}[k]\sum_{u\in\mathcal{U}}\B{V}_u\B{P}_{j,u}[k]\B{V}_u^\He\B{H}_{\bar{j}}^\He[k]\right)\notag\\
&\leq \log_2\det\left(\mathbf{I}+\sum_{u\in\mathcal{U}}\B{H}_u[k]\B{V}_u\B{P}_{j,u}[k]\B{V}_u^\He\B{H}_u^\He[k]\right),
\label{eq:jUserRateBound}
\end{align}
where we have used that the covariance matrices are designed for users $u\in\mathcal{U}$. Thus the equivalent channel gains are larger than for users not included in the set $\mathcal{U}$. Combining the bounds in \eqref{eq:boundSumRate} and \eqref{eq:jUserRateBound}, and introducing the auxiliary positive semi-definite matrices $\B{A}_u[k]=\B{H}_u[k]\B{V}_u\B{P}_{j,u}[k]\B{V}_u^\He\B{H}_u^\He[k]$, we obtain 
\begin{align*}
\sum_{k=1}^K\sum_{u\in\mathcal{U}}R^\prime_u+R^\prime_j\leq R+\sum_{k=1}^K\log_2\frac{\det\left(\mathbf{I}+\sum_{u\in\mathcal{U}}\B{A}_u[k]\right)}{\prod_{u\in\mathcal{U}}\det\left(\mathbf{I}+\B{A}_u[k]\right)}.
\end{align*}
Now, the achievable sum-rate not increasing after allocating a new stream is equivalent to proving that 
\begin{equation}
\det\big(\mathbf{I}+\sum_{u=1}^{|\mathcal{U}|}\B{A}_u[k]\big)\leq \prod_{u=1}^{|\mathcal{U}|}\det\left(\mathbf{I}+\B{A}_u[k]\right).
\label{eq:inequalityPerCarrier}
\end{equation}
Rewriting the left-hand side of this inequality, we get
\begin{align*}
&\prod_{u=1}^{|\mathcal{U}|}\det\left(\mathbf{I}+\big(\mathbf{I}+\sum_{i=1}^{u-1}\B{A}_i[k]\big)^{-1}\B{A}_u[k]\right)\leq\\ &\hspace{5.5cm}\prod_{u=1}^{|\mathcal{U}|}\det\left(\mathbf{I}+\B{A}_u[k]\right).
\end{align*}
This last inequality is proven by showing that, for positive semidefinite matrices $\B{A}$ and $\B{B}$, the following inequality holds
\begin{equation}
\det\left(\mathbf{I}+\big(\mathbf{I}+\B{B}\big)^{-1}\B{A}\right)\leq \det\left(\mathbf{I}+\B{A}\right).
\label{eq:detIneq}
\end{equation}
Denoting as $\lambda_n(\B{X})$ the $n$-th largest eigenvalue of $\B{X}$, we have that \cite[p. 423]{matrixAnalysis}
\begin{equation}
\lambda_n(\big(\mathbf{I}+\B{B}\big)^{-1}\B{A})\leq \lambda_1\left(\big(\mathbf{I}+\B{B}\big)^{-1}\right)\lambda_n(\B{A}). 
\end{equation} 
We therefore conclude that the eigenvalues of $\big(\mathbf{I}+\B{B}\big)^{-1}\B{A}$ are smaller than or equal to those of $\B{A}$. Correspondingly, the inequalities in  \eqref{eq:detIneq} and \eqref{eq:inequalityPerCarrier} hold, thus completing the proof.

\bibliographystyle{IEEEtran}

\bibliography{refs}

\begin{thebibliography}{10}
\providecommand{\url}[1]{#1}
\csname url@samestyle\endcsname
\providecommand{\newblock}{\relax}
\providecommand{\bibinfo}[2]{#2}
\providecommand{\BIBentrySTDinterwordspacing}{\spaceskip=0pt\relax}
\providecommand{\BIBentryALTinterwordstretchfactor}{4}
\providecommand{\BIBentryALTinterwordspacing}{\spaceskip=\fontdimen2\font plus
\BIBentryALTinterwordstretchfactor\fontdimen3\font minus
  \fontdimen4\font\relax}
\providecommand{\BIBforeignlanguage}[2]{{%
\expandafter\ifx\csname l@#1\endcsname\relax
\typeout{** WARNING: IEEEtran.bst: No hyphenation pattern has been}%
\typeout{** loaded for the language `#1'. Using the pattern for}%
\typeout{** the default language instead.}%
\else
\language=\csname l@#1\endcsname
\fi
#2}}
\providecommand{\BIBdecl}{\relax}
\BIBdecl

\bibitem{bjornson2017massive}
E.~Bj{\"o}rnson, J.~Hoydis, and L.~Sanguinetti, ``Massive {MIMO} networks:
  Spectral, energy, and hardware efficiency,'' \emph{Foundations and Trends in
  Signal Processing}, vol.~11, no. 3-4, pp. 154--655, 2017.

\bibitem{Lu2014}
L.~Lu, G.~Y. Li, A.~L. Swindlehurst, A.~Ashikhmin, and R.~Zhang, ``An overview
  of massive {MIMO}: Benefits and challenges,'' \emph{IEEE Journal of Selected
  Topics in Signal Processing}, vol.~8, no.~5, pp. 742--758, 2014.

\bibitem{Larsson14_magazine}
E.~G. Larsson, O.~Edfors, F.~Tufvesson, and T.~L. Marzetta, ``Massive {MIMO}
  for next generation wireless systems,'' \emph{IEEE Communications Magazine},
  vol.~52, no.~2, pp. 186--195, 2014.

\bibitem{Busari18}
S.~A. Busari, S.~Mumtaz, S.~Al-Rubaye, and J.~Rodriguez, ``{5G} millimeter-wave
  mobile broadband: Performance and challenges,'' \emph{IEEE Communications
  Magazine}, vol.~56, no.~6, pp. 137--143, 2018.

\bibitem{Pfadler20}
A.~Pfadler, C.~Ballesteros, J.~Romeu, and L.~Jofre, ``Hybrid massive {MIMO} for
  urban {V2I}: Sub-6 ghz vs mm{W}ave performance assessment,'' \emph{IEEE
  Transactions on Vehicular Technology}, vol.~69, no.~5, pp. 4652--4662, 2020.

\bibitem{Zhang16}
R.~Zhang, Z.~Zhong, J.~Zhao, B.~Li, and K.~Wang, ``Channel measurement and
  packet-level modeling for {V2I} spatial multiplexing uplinks using massive
  {MIMO},'' \emph{IEEE Transactions on Vehicular Technology}, vol.~65, no.~10,
  pp. 7831--7843, 2016.

\bibitem{Bjornson_uavs}
A.~Garcia-Rodriguez, G.~Geraci, D.~Lopez-Perez, L.~G. Giordano, M.~Ding, and
  E.~Bjornson, ``The essential guide to realizing {5G}-connected {UAVs} with
  massive {MIMO},'' \emph{IEEE Communications Magazine}, vol.~57, no.~12, pp.
  84--90, 2019.

\bibitem{Chandhar_18}
P.~Chandhar, D.~Danev, and E.~G. Larsson, ``Massive {MIMO} for communications
  with drone swarms,'' \emph{IEEE Transactions on Wireless Communications},
  vol.~17, no.~3, pp. 1604--1629, 2018.

\bibitem{Lee21}
B.~M. Lee, ``Energy-efficient operation of massive {MIMO} in industrial
  internet-of-things networks,'' \emph{IEEE Internet of Things Journal},
  vol.~8, no.~9, pp. 7252--7269, 2021.

\bibitem{liang2014low}
L.~Liang, W.~Xu, and X.~Dong, ``Low-complexity hybrid precoding in massive
  multiuser {MIMO} systems,'' \emph{IEEE Wireless Communications Letters},
  vol.~3, no.~6, pp. 653--656, 2014.

\bibitem{Sohrabi2016_narrowband}
F.~Sohrabi and W.~Yu, ``Hybrid digital and analog beamforming design for
  large-scale antenna arrays,'' \emph{IEEE Journal of Selected Topics in Signal
  Processing}, vol.~10, no.~3, pp. 501--513, 2016.

\bibitem{Gao2016}
X.~Gao, L.~Dai, S.~Han, C.-L. I, and R.~W. Heath, ``Energy-efficient hybrid
  analog and digital precoding for mm{W}ave {MIMO} systems with large antenna
  arrays,'' \emph{IEEE Journal on Selected Areas in Communications}, vol.~34,
  no.~4, pp. 998--1009, 2016.

\bibitem{sohrabi2017hybrid}
F.~Sohrabi and W.~Yu, ``Hybrid analog and digital beamforming for mm{W}ave
  {OFDM} large-scale antenna arrays,'' \emph{IEEE Journal on Selected Areas in
  Communications}, vol.~35, no.~7, pp. 1432--1443, 2017.

\bibitem{Alkhateeb2016_wideband}
A.~Alkhateeb and R.~W. Heath, ``Frequency selective hybrid precoding for
  limited feedback millimeter wave systems,'' \emph{IEEE Transactions on
  Communications}, vol.~64, no.~5, pp. 1801--1818, 2016.

\bibitem{Yu2016}
X.~Yu, J.-C. Shen, J.~Zhang, and K.~B. Letaief, ``Alternating minimization
  algorithms for hybrid precoding in millimeter wave {MIMO} systems,''
  \emph{IEEE Journal of Selected Topics in Signal Processing}, vol.~10, no.~3,
  pp. 485--500, 2016.

\bibitem{Park2017}
S.~Park, A.~Alkhateeb, and R.~W. Heath, ``Dynamic subarrays for hybrid
  precoding in wideband mm{W}ave {MIMO} systems,'' \emph{IEEE Transactions on
  Wireless Communications}, vol.~16, no.~5, pp. 2907--2920, 2017.

\bibitem{Chen21}
Y.~Chen, Y.~Xiong, D.~Chen, T.~Jiang, S.~X. Ng, and L.~Hanzo, ``Hybrid
  precoding for wideband millimeter wave {MIMO} systems in the face of beam
  squint,'' \emph{IEEE Transactions on Wireless Communications}, vol.~20,
  no.~3, pp. 1847--1860, 2021.

\bibitem{Kong15}
L.~Kong, S.~Han, and C.~Yang, ``Wideband hybrid precoder for massive {MIMO}
  systems,'' in \emph{2015 IEEE Global Conference on Signal and Information
  Processing (GlobalSIP)}, 2015, pp. 305--309.

\bibitem{gonzalez2019hybrid}
J.~P. Gonz{\'a}lez-Coma, W.~Utschick, and L.~Castedo, ``Hybrid {LISA} for
  wideband multiuser millimeter-wave communication systems under beam squint,''
  \emph{IEEE Transactions on Wireless Communications}, vol.~18, no.~2, pp.
  1277--1288, 2019.

\bibitem{Yuan19}
H.~Yuan, J.~An, N.~Yang, K.~Yang, and T.~Q. Duong, ``Low complexity hybrid
  precoding for multiuser millimeter wave systems over frequency selective
  channels,'' \emph{IEEE Transactions on Vehicular Technology}, vol.~68, no.~1,
  pp. 983--987, 2019.

\bibitem{GaoCheLiu21}
S.~Gao, X.~Cheng, and L.~Yang, ``Mutual information maximizing wideband
  multi-user (w{M}u) mm{W}ave massive {MIMO},'' \emph{IEEE Transactions on
  Communications}, vol.~69, no.~5, pp. 3067--3078, 2021.

\bibitem{Alkhateeb15}
A.~Alkhateeb, G.~Leus, and R.~W. Heath, ``Limited feedback hybrid precoding for
  multi-user millimeter wave systems,'' \emph{IEEE Transactions on Wireless
  Communications}, vol.~14, no.~11, pp. 6481--6494, 2015.

\bibitem{ChChJiHa19}
Y.~Chen, D.~Chen, T.~Jiang, and L.~Hanzo, ``Channel-covariance and
  angle-of-departure aided hybrid precoding for wideband multiuser millimeter
  wave mimo systems,'' \emph{IEEE Transactions on Communications}, vol.~67,
  no.~12, pp. 8315--8328, 2019.

\bibitem{Liu18}
A.~Liu, V.~K.~N. Lau, and M.-J. Zhao, ``Stochastic successive convex
  optimization for two-timescale hybrid precoding in massive mimo,'' \emph{IEEE
  Journal of Selected Topics in Signal Processing}, vol.~12, no.~3, pp.
  432--444, 2018.

\bibitem{Lui_conference2018}
A.~Liu, V.~Lau, and Y.~Teng, ``Power minimization for massive mimo systems with
  two-timescale hybrid precoding,'' in \emph{2018 IEEE International Conference
  on Communications (ICC)}, 2018, pp. 1--6.

\bibitem{Mai18}
R.~Mai, T.~Le-Ngoc, and D.~H.~N. Nguyen, ``Two-timescale hybrid rf-baseband
  precoding with mmse-vp for multi-user massive mimo broadcast channels,''
  \emph{IEEE Transactions on Wireless Communications}, vol.~17, no.~7, pp.
  4462--4476, 2018.

\bibitem{venugopal2017}
K.~Venugopal, N.~González-Prelcic, and R.~W. Heath, ``Optimality of frequency
  flat precoding in frequency selective millimeter wave channels,'' \emph{IEEE
  Wireless Communications Letters}, vol.~6, no.~3, pp. 330--333, 2017.

\bibitem{Wang2019}
B.~Wang, M.~Jian, F.~Gao, G.~Y. Li, and H.~Lin, ``Beam squint and channel
  estimation for wideband mm{W}ave massive {MIMO-OFDM} systems,'' \emph{IEEE
  Transactions on Signal Processing}, vol.~67, no.~23, pp. 5893--5908, 2019.

\bibitem{BoLeHaVa16}
T.~E. Bogale, L.~B. Le, A.~Haghighat, and L.~Vandendorpe, ``{On the Number of
  RF Chains and Phase Shifters, and Scheduling Design With Hybrid Analog
  Digital Beamforming},'' \emph{IEEE Transactions on Wireless Communications},
  vol.~15, no.~5, pp. 3311--3326, May 2016.

\bibitem{GoFrCa21}
J.~P. {Gonz\'alez-Coma}, O.~{Fresnedo}, and L.~{Castedo}, ``{Rank Constrained
  Precoding for the Downlink of mmWave Massive MIMO Hybrid Systems},''
  \emph{IEEE Access}, vol.~9, pp. 28\,459--28\,470, 2021.

\bibitem{Sayeed2002}
A.~Sayeed, ``Deconstructing multiantenna fading channels,'' \emph{IEEE
  Transactions on Signal Processing}, vol.~50, no.~10, pp. 2563--2579, 2002.

\bibitem{VeGoHe19}
K.~Venugopal, N.~González-Prelcic, and R.~W. Heath, ``Optimal frequency-flat
  precoding for frequency-selective millimeter wave channels,'' \emph{IEEE
  Transactions on Wireless Communications}, vol.~18, no.~11, pp. 5098--5112,
  2019.

\bibitem{rodriguez2007comunicaciones}
\BIBentryALTinterwordspacing
A.~Rodr{\'\i}guez, F.~Gonz{\'a}lez, J.~Sueiro, R.~Valcarce, C.~Nartallo, and
  F.~Cruz, \emph{COMUNICACIONES DIGITALES}.\hskip 1em plus 0.5em minus
  0.4em\relax Pearson Educaci{\'o}n/Prentice Hall, 2007. [Online]. Available:
  \url{https://books.google.es/books?id=0Rn3PQAACAAJ}
\BIBentrySTDinterwordspacing

\bibitem{Elbir20}
A.~M. Elbir and A.~K. Papazafeiropoulos, ``Hybrid precoding for multiuser
  millimeter wave massive {MIMO} systems: A deep learning approach,''
  \emph{IEEE Transactions on Vehicular Technology}, vol.~69, no.~1, pp.
  552--563, 2020.

\bibitem{LeeWaChYu15}
Y.-Y. Lee, C.-H. Wang, and Y.-H. Huang, ``A hybrid rf/baseband precoding
  processor based on parallel-index-selection matrix-inversion-bypass
  simultaneous orthogonal matching pursuit for millimeter wave mimo systems,''
  \emph{IEEE Transactions on Signal Processing}, vol.~63, no.~2, pp. 305--317,
  2015.

\bibitem{AyRaAbPiHe14}
O.~E. Ayach, S.~Rajagopal, S.~Abu-Surra, Z.~Pi, and R.~W. Heath, ``Spatially
  sparse precoding in millimeter wave mimo systems,'' \emph{IEEE Transactions
  on Wireless Communications}, vol.~13, no.~3, pp. 1499--1513, March 2014.

\bibitem{YuShZhLe16}
X.~Yu, J.~C. Shen, J.~Zhang, and K.~B. Letaief, ``{Alternating Minimization
  Algorithms for Hybrid Precoding in Millimeter Wave MIMO Systems},''
  \emph{IEEE Journal of Selected Topics in Signal Processing}, vol.~10, no.~3,
  pp. 485--500, April 2016.

\bibitem{YuLa11}
H.~Yu and V.~K.~N. Lau, ``{Rank-Constrained Schur-Convex Optimization With
  Multiple Trace/Log-Det Constraints},'' \emph{IEEE Transactions on Signal
  Processing}, vol.~59, no.~1, pp. 304--314, January 2011.

\bibitem{MaGoLi11}
\BIBentryALTinterwordspacing
S.~Ma, D.~Goldfarb, and L.~Chen, ``{Fixed point and Bregman iterative methods
  for matrix rank minimization},'' \emph{Mathematical Programming}, vol. 128,
  no.~1, pp. 321--353, June 2011. [Online]. Available:
  \url{https://doi.org/10.1007/s10107-009-0306-5}
\BIBentrySTDinterwordspacing

\bibitem{ViJiGo03}
S.~Vishwanath, N.~Jindal, and A.~Goldsmith, ``{Duality, achievable rates, and
  sum-rate capacity of Gaussian MIMO broadcast channels},'' \emph{IEEE
  Transactions on Information Theory}, vol.~49, no.~10, pp. 2658--2668, October
  2003.

\bibitem{GaDaWaCh15}
Z.~Gao, L.~Dai, Z.~Wang, and S.~Chen, ``{Spatially Common Sparsity Based
  Adaptive Channel Estimation and Feedback for FDD Massive MIMO},'' \emph{IEEE
  Transactions on Signal Processing}, vol.~63, no.~23, pp. 6169--6183, December
  2015.

\bibitem{gonzalez2018channel}
J.~P. Gonz{\'a}lez-Coma, J.~Rodriguez-Fernandez, N.~Gonz{\'a}lez-Prelcic,
  L.~Castedo, and R.~W. Heath, ``Channel estimation and hybrid precoding for
  frequency selective multiuser mmwave mimo systems,'' \emph{IEEE Journal of
  Selected Topics in Signal Processing}, vol.~12, no.~2, pp. 353--367, 2018.

\bibitem{Be99}
D.~Bertsekas, \emph{{Nonlinear programming}}.\hskip 1em plus 0.5em minus
  0.4em\relax Belmont, Massachusetts: Athena Scientific, 1999.

\bibitem{JiReViGo05}
N.~Jindal, W.~Rhee, S.~Vishwanath, S.~A. Jafar, and A.~Goldsmith, ``{Sum power
  iterative water-filling for multi-antenna Gaussian broadcast channels},''
  \emph{IEEE Transactions on Information Theory}, vol.~51, no.~4, pp.
  1570--1580, April 2005.

\bibitem{rusu2016low}
C.~Rusu, R.~Mendez-Rial, N.~Gonz{\'a}lez-Prelcic, and R.~W. Heath, ``Low
  complexity hybrid precoding strategies for millimeter wave communication
  systems,'' \emph{IEEE Transactions on Wireless Communications}, vol.~15,
  no.~12, pp. 8380--8393, 2016.

\bibitem{sdpt3}
K.-C. Toh, R.~Tütüncü, and M.~Todd, ``On the implementation and usage of
  {SDPT3} – a {M}atlab software package for semidefinite-quadratic-linear
  programming, version 4.0,'' \emph{Handbook on Semidefinite, Conic and
  Polynomial Optimization}, vol. 166, 2012.

\bibitem{YoGo06}
{Taesang Yoo} and A.~{Goldsmith}, ``{On the optimality of multiantenna
  broadcast scheduling using zero-forcing beamforming},'' \emph{IEEE Journal on
  Selected Areas in Communications}, vol.~24, no.~3, pp. 528--541, 2006.

\bibitem{GuUtDi09}
C.~Guthy, W.~Utschick, and G.~Dietl, ``{Low-Complexity Linear Zero-Forcing for
  the MIMO Broadcast Channel},'' \emph{IEEE Journal of Selected Topics in
  Signal Processing}, vol.~3, no.~6, pp. 1106--1117, December 2009.

\bibitem{matrixAnalysis}
R.~A. Horn and C.~R. Johnson, \emph{Matrix Analysis}.\hskip 1em plus 0.5em
  minus 0.4em\relax Cambridge University Press, 1985.

\end{thebibliography}

\end{document}